\newtheorem{thm}{Theorem}
\newtheorem{lem}{Lemma}
\newtheorem{prop}{Proposition}
\newtheorem{assumption}{Assumption}
\newcounter{rmk}
\newcommand\e{\epsilon}
\newcommand\lam{\lambda}
\newcommand\mR{\mathds{R}}
\newcommand\mN{\mathcal{N}}
\def\t{{ \mathrm{\scriptscriptstyle T} }}
\newcommand\sign[1]{\textnormal{sign}{(#1)}}
\newcommand\en{\textnormal{en}}
\title{Sparse Laplacian Shrinkage with the Graphical Lasso Estimator for Regression Problems}
\author{Yuehan Yang\thanks{School of Statistics and Mathematics, Central University of Finance and Economics},
~~ Siwei Xia$^\dag$~~and~~Hu Yang\thanks{College of Mathematics and Statistics, Chongqing University}
}
\date{}
\begin{document}
\maketitle
\begin{abstract}
This paper considers a high-dimensional linear regression problem where there are complex correlation structures among predictors. We propose a graph-constrained regularization procedure, named Sparse Laplacian Shrinkage with the Graphical Lasso Estimator (SLS-GLE). The procedure uses the estimated precision matrix to describe the specific information on the conditional dependence pattern among predictors, and encourages both sparsity on the regression model and the graphical model. We introduce the Laplacian quadratic penalty adopting the graph information, and give detailed discussions on the advantages of using the precision matrix to construct the Laplacian matrix. Theoretical properties and numerical comparisons are presented to show that the proposed method improves both model interpretability and accuracy of estimation. We also apply this method to a financial problem and prove that the proposed procedure is successful in assets selection.
\end{abstract}

\vspace*{4mm}
\noindent {\bf Keywords:} Sparse regression; Graphical models; Laplacian Matrix.

\section{Introduction}
High-dimensional regression problems have received continuous interest and tremendous effort to develop new methodologies and theories. There are several important methods proposed to treat the task, such as the Lasso of \citet{tibshirani1996lasso}, SCAD of \citet{fan2001variable}, MCP of \citet{zhang2010mcp}, etc. These methods are able to estimate and select variables simultaneously and effectively, especially in high-dimensional settings.
Many works have been made in studying the computational algorithms and theoretical properties of these methods, i.e. \citet{friedman2007pathwise,friedman2010regularization,fan2004nonconcave,zhao2006lasso,wainwright2009sharp,mazumder2011sparsenet}.

A usual assumption for above methods is the requirement of independence between the penalty and correlation among predictors, and yet, correlated features become more and more common in many applications. To tackle this problem, several efficient methods have been proposed. Elastic Net, proposed by \citet{zou2005elastic}, utilizes the $l_1$ and $l_2$ penalty to select variables with grouping effect. Group lasso \citep{yuan2006model} and the Sparse-group lasso \citep{simon2013sparse} incorporate group or hierarchical structure of variables.
The Mnet method proposed from \citet{huang2016mnet}, a combination of the MCP and $l_2$ penalty, was shown to be equal to the oracle ridge estimator with high probability under reasonable conditions. \citet{tibshirani2005fused} proposed a new penalty, called Fused Lasso, to promote smoothness over ordered variables and successfully maintain grouping effects. Inspired by the fused penalty, \citet{she2010cluster}, \citet{hebiri2011smooth} and \citet{guo2016spline} proposed the clustered Lasso, Smooth-Lasso and Spline-lasso, respectively.

However, there are still gaps in estimating correlated features. Rare methods focus on estimating the specific information on the correlated patterns among the predictors while it could greatly improve the accuracy of both estimations and predictions. \citet{li2008network,li2010graph} proposed a network-constrained and graph-constrained estimation procedure, combining the $ l_1 $ penalty and a Laplacian matrix penalty assumed to be known a priori network or graphical information. \citet{daye2009fusion} and \citet{huang2011laplacian} proposed the Weighted fusion and the Sparse Laplacian Shrinkage (SLS), respectively. Both methods use sample correlation between predictors to construct the Laplacian matrix penalty. There's one shortcoming that the sample correlation is usually insufficient in sparsity. To fix this problem, \citet{huang2011laplacian} applied a hard threshold function calculated by the Fisher transformation, and provided several forms of adjacency matrix for the Laplacian quadratic penalty.

Our goal in this paper is to study the data with complex patterns. We plan to measure the conditional dependencies among predictors and introduce the information of graph structure into the shrinkage penalty in order to better predict the regression model, furthermore, provide suitable procedure and algorithm achieving this target with theoretical support and full comparisons with other methods. When $ p \gg n $, we expect to capture low dimensional structures in both regression model and graphical model, and these sparse structures could help us focus on the important features. In light of this, we propose a new method, called Sparse Laplacian Shrinkage with the Graphical Lasso Estimator (SLS-GLE).
The procedure uses the Laplacian quadratic penalty and applies the estimated precision matrix to construct the graph Laplacian, then use the regular penalty for selection and sparsity of regression features. The contributions of this paper are as follows.
\begin{itemize}
\item[(1)] We measure the specific information between predictors and apply them into the proposed procedure. The graph structure gives us an insight into the complex patterns, and helps us achieving accurate modeling.
\item[(2)] In SLS-GLE, we use the conditional dependences to construct the graph Laplacian. As a contrast, SLS used the information related to the Pearson's correlation coefficient and combine with a hard thresholding function. Although the latter can obtain a sparse graph too, but easily leading to unnecessary bias and computational cost for the problems of large dimensionality.
\item[(3)] Furthermore, we expect the proposed method to describe different kinds of structures involved in predictors, including different sparsity patterns. We can easily solve this problem by adapting the tuning parameter of the Graphical lasso (Glasso). In the meantime, SLS needs to find a suitable form of adjacency measure for each data example.
\item[(4)] Finally, SLS-GLE uses the Glasso to estimate the graph structure. Glasso has remarkable advantages on the theoretical support and computational algorithm, leading to a nice estimation with high probability converging to the true precision matrix. It will support SLS-GLE to achieve its own oracle properties.
\end{itemize}

This paper is organized as follows. Section 2 presents the
method. Section 3 shows the full comparisons between SLS-GLE and other penalties. In Section 4 we study the theoretical properties of the SLS-GLE estimator. The simulations and application in Section 5 analyse the performance of the proposed method and compare with several existing methods. We conclude in Section 6. Technique details are provided in the Appendix.

\section{Method}
Consider a linear regression model with $ n $-dimensional response $ y $, predictors $ X = (X_1,$\\$\ldots,X_p) \in\mR^{n \times p} $ and error $ \e $:
\[y = X\beta + \e.\]
We are interested in estimating $ \beta $ with high-dimensional ($ n \ll p$) and sparsity setting, i.e. let $ q $ be the nonzero numbers of $ \beta $ and $ q \leqslant n $. Further, let $ X $ be the random samples of a multivariate normal distribution $\mN_p(0, \Sigma)$. We wish to estimate the precision matrix $ \Theta = (\theta_{jj'})_{j,j'=1,\ldots,p} = \Sigma^{-1} $.
The sparse $ \Theta $ corresponds to a Gaussian graphical model $ G=(V, S) $, where $ V = \{1, \ldots, p\} $ is the set of nodes corresponding to the random variables and $ S $ is a set of undirected edges representing the conditional dependence relationships between the variables. In both regression and graphical model, the dimension $ p $ is allowed to grow as $ n $ increases. For notational simplicity, we do not index the parameters with $ n $.

We propose the following two-step estimator for $ \beta $. We add two selection penalties to both functions respectively and construct a sparse estimator of regression coefficients depending on $ \Theta $.
\begin{center}
\textbf{Sparse Laplacian Shrinkage with the Graphical Lasso Estimator (SLS-GLE)}
\end{center}
Step 1 (Estimation of the precision matrix)
\begin{equation}\label{eq theta}
\mbox{Let}~ \hat \Theta = \arg \min\limits_{\Theta} \{-\log |\Theta| + \mbox{tr} (\Theta \hat \Sigma) + \lam_0 \|\Theta\|_1\},
\end{equation}
where $\hat{\Sigma}$ is the sample covariance matrix.

Step 2 (Estimation of the regression coefficients)
\begin{align}\label{eq beta}
\mbox{Set}~\hat \beta  = \arg \min\limits_{\beta} \big\{ \frac{1}{{2}}\|y - X\beta\|_2^2 + P_{\lam_1}(\beta) + \dfrac{\lam_2}{2}\sum\limits_{1 \leqslant j < j' \leqslant p} |\hat \theta_{jj'}|(\beta_j - s_{jj'}\beta _{j'})^2\big\},
\end{align}
where $ s_{jj'} = \text{sign}(\hat \theta_{jj'}) $. The first step estimates the precision matrix $ \Theta $ by applying the Glasso, an efficient estimate of the Gaussian graphical model, proposed by \citet{yuan2007model}. In the second step, $ P_{\lam_1}(\beta) $ is the penalty for selection and sparsity of regression features. In this paper, We apply the convex penalty $ P_{\lam_1}(\beta) = \lam_1\|\beta\|_1 $. There are several other selection penalties, such as the MCP \citep{zhang2010mcp} and the SCAD \citep{fan2001variable}; many of them have been shown to perform better over the $ l_1 $ penalty when features were independent. However, according to our simulation and empirical experience, the proposed method, using a graph penalty and the Lasso penalty, is more stable and often outperforms other penalties in dealing with complex datasets. To illustrate our point, we give a simple example in Figure~\ref{fig21} before readers get more detail comparisons in simulations. In Figure~\ref{fig21}, we compare the different solution paths of our method with the Lasso penalty and the MCP penalty according to the mean values with 100 repetitions. The former has better performance and a smaller estimation error. More details of the comparisons on different penalty functions can be found in the simulation part.

\begin{figure}[ht]
    \centering
    \subfigure[SLS-GLE with Lasso penalty]
    {\includegraphics[width=.48\textwidth,height=.47\columnwidth]{./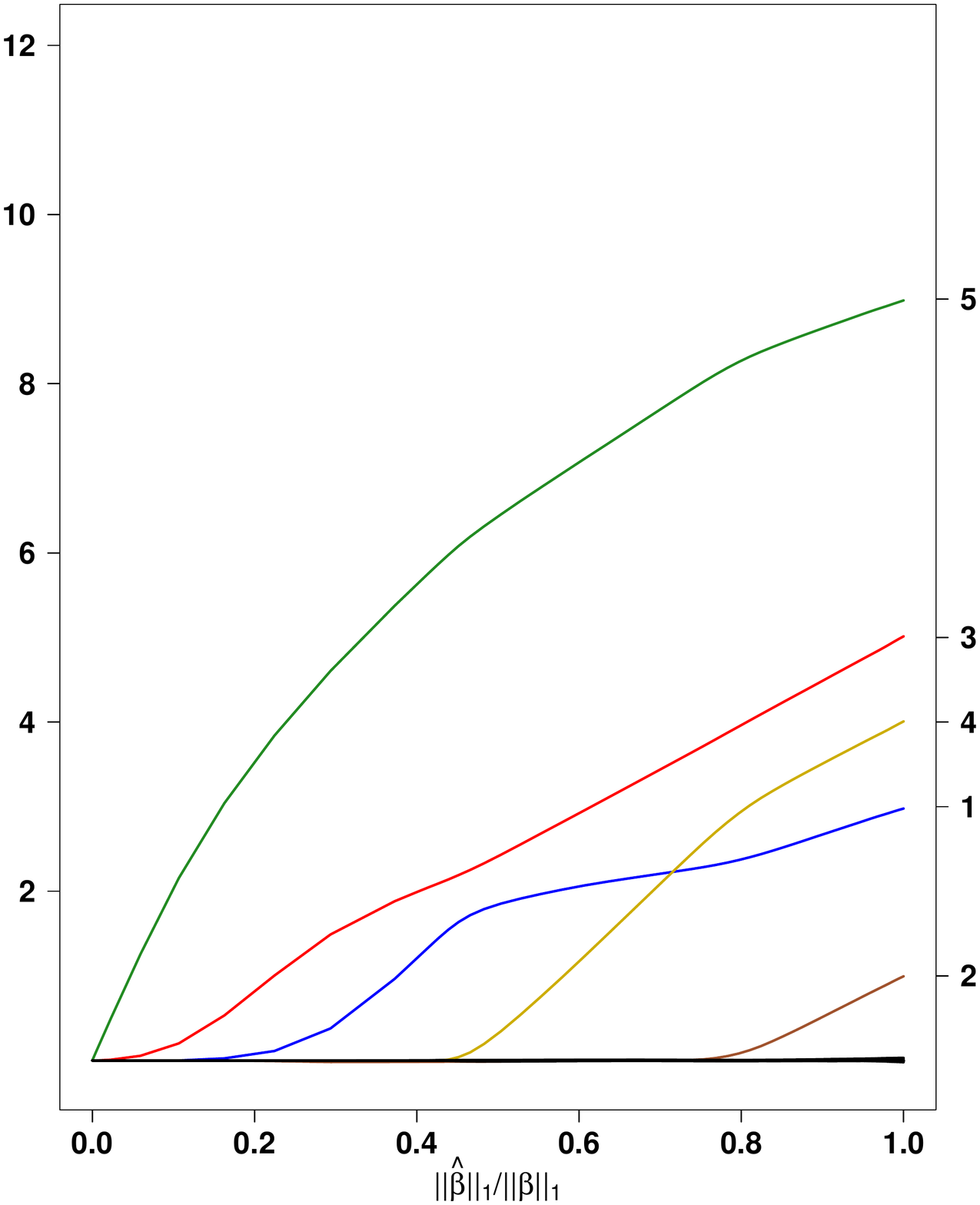}}
    \subfigure[SLS-GLE with MCP penalty]
    {\includegraphics[width=.48\textwidth,height=.47\columnwidth]{./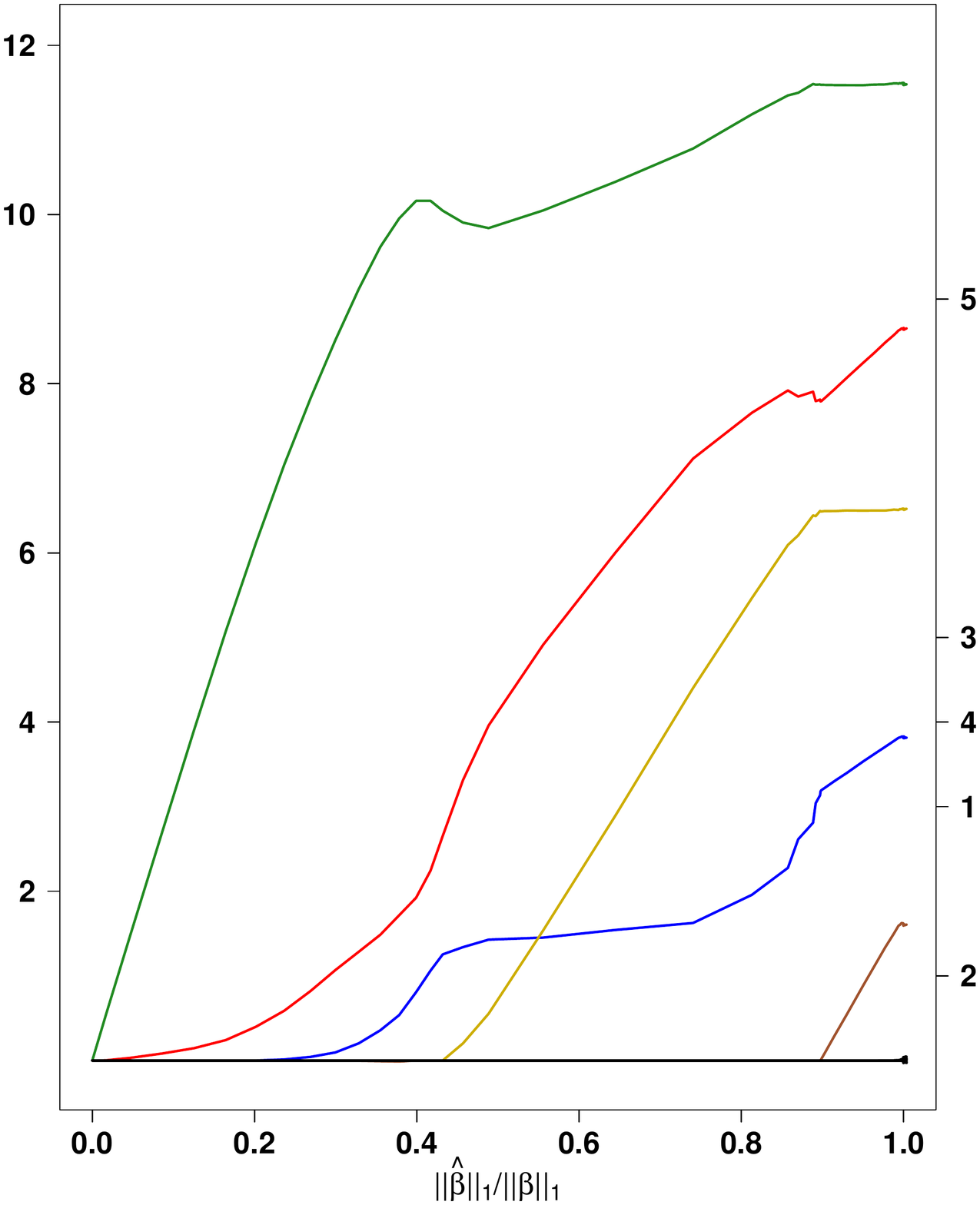}}
	\caption{Data: $n=50$, $p=20$, $\beta=(3,1,5,4,9,0,\dots,0)^\t$. $\e_i$ is generated from $\mN(0,1)$ and each row of $X$ is independently generated from $\mN_p(0,\Theta^{-1})$, where $\Theta$ is block diagonal with same 4 blocks, each of size $5\times 5$, and all the diagonal elements of blocks are 1 and the first off-diagonal elements are $0.5$.}
    \label{fig21}
\end{figure}

The second term of~\eqref{eq beta} is a Laplacian quadratic penalty \citep{chung1997spectral} and the assigned weights to the $ l_2 $ norm come from the adjacency matrix. In this paper we construct the adjacency matrix by the estimated precision matrix $ \hat \Theta $. The estimated Laplace matrix is denoted by $ \hat \Gamma = \hat D - \hat \Theta $ where  $ \hat D = \text{diag}(\hat d_1,\ldots,\hat d_p) $ and $ \hat d_j = \sum\limits_{j'=1}^p|\hat \theta_{jj'}| $. More details on Laplacian matrix are given in Appendix. In this way, \eqref{eq beta} can be written as
\begin{equation}\label{eq beta2}
\hat \beta  = \arg \min\limits_{\beta} \big\{ \frac{1}{{2}}\|y - X\beta\|_2^2 + \lam_1\|\beta\|_1 + \dfrac{\lam_2}{2}\beta^\t\hat \Gamma \beta\big\}.
\end{equation}
It is well known that the Laplacian matrix is a representation of a graph and incorporates the specific information on the complex pattern among the variables into regression. As an adaptive weight, the adjacency matrix is required to be sparse. Several forms of adjacency measures are studied, \citep{zhang2005general,huang2011laplacian,daye2009fusion}. The precision matrix represents the conditional dependence relationships between the variables, equivalently to the estimation of undirected Gaussian graphical model. We will provide evidence in the next Section and also simulations that the proposed method can yield robust and accurate results.

The proposed method enjoys the computational advantage of several efficient algorithms, which we can solve SLS-GLE as following.
\begin{prop}
Given data set $ (y, X) $, the estimation from the first step $ \hat \Theta $ and parameter $ \lam_2 $, define an artificial data set $ (y^*, X^*) $ by
\[ X^*_{(n+p) \times p} = (1+\lam_2)^{-1/2}
\begin{pmatrix} X\\\sqrt{\lam_2}\hat L \end{pmatrix}, \ \ \ y^*_{(n+p)} =
\begin{pmatrix}y \\0\end{pmatrix}, \]
where $ \hat L $ is a $ p \times p $ matrix that $ \hat L^\t \hat L=\hat \Gamma $. Then the SLS-GLE criterion can be written as
\[\hat \beta^*  := \arg \min\limits_{\beta}\bigg\{
\dfrac{1}{2n}\|y^* - X^*\beta\|^2_2  + P_{\lam_1}(\beta)
\bigg\},\]
then
\[ \hat \beta = \dfrac{1}{\sqrt{1+\lam_2}} \hat \beta^*. \]
\end{prop}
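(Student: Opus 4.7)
The plan is to mimic the well-known elastic-net reduction of \citet{zou2005elastic}: absorb the Laplacian quadratic term into the residual sum of squares through data augmentation, and then rescale the coefficient vector so that the surviving objective looks like a pure Lasso. Throughout, I use that $\hat\Gamma$ is positive semi-definite (it is a weighted graph Laplacian, as recalled in the Appendix), so a square-root factor $\hat L$ with $\hat L^\t\hat L=\hat\Gamma$ exists.

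\textbf{Step 1 (rewrite the quadratic penalty as a squared norm).} Since $\beta^\t\hat\Gamma\beta=\|\hat L\beta\|_2^2$, the last two quadratic terms of \eqref{eq beta2} combine into a single squared norm on a stacked design:
\begin{equation*}
\tfrac{1}{2}\|y-X\beta\|_2^2+\tfrac{\lam_2}{2}\beta^\t\hat\Gamma\beta
=\tfrac{1}{2}\left\|\begin{pmatrix}y\\0\end{pmatrix}-\begin{pmatrix}X\\\sqrt{\lam_2}\hat L\end{pmatrix}\beta\right\|_2^2.
\end{equation*}

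\textbf{Step 2 (change of variables).} Introduce $\beta^{*}=\sqrt{1+\lam_2}\,\beta$, so that $\beta=\beta^{*}/\sqrt{1+\lam_2}$. Factoring $(1+\lam_2)^{-1/2}$ inside the norm turns the stacked design into exactly the $X^{*}$ given in the proposition and produces $y^{*}=(y^\t,0)^\t$ unchanged; simultaneously, the $\ell_1$ penalty transforms as $\lam_1\|\beta\|_1=(\lam_1/\sqrt{1+\lam_2})\,\|\beta^{*}\|_1$. Hence the SLS--GLE objective in $\beta$ equals, up to the multiplicative constant that drops out of the argmin, a Lasso objective in $\beta^{*}$ with penalty parameter $\lam_1/\sqrt{1+\lam_2}$ (which we absorb into the definition of $P_{\lam_1}$, or equivalently into the overall $1/n$ prefactor that the paper places in front of the squared loss). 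Minimizing over $\beta$ is therefore equivalent to minimizing the displayed Lasso criterion over $\beta^{*}$, and the minimizers are linked by $\hat\beta=\hat\beta^{*}/\sqrt{1+\lam_2}$.

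\textbf{Where I expect friction.} The substantive content is Step 1, and it is routine once positive semidefiniteness of $\hat\Gamma$ is in hand; the only delicate piece is the bookkeeping of the constant factors. Specifically one has to track how the factor $(1+\lam_2)^{-1/2}$ absorbed into $X^{*}$ interacts with the $\ell_1$ penalty after the $\beta\mapsto\beta^{*}$ substitution, and to check that the loss-normalization used in writing the Lasso step (the $1/(2n)$ out front) can be matched to the original $1/2$ by a compatible reinterpretation of $\lam_1$. Once the scaling of the penalty is fixed, the equivalence of the two optimization problems is automatic because it amounts to a bijective linear reparametrization of the variable of optimization, so the stated relation $\hat\beta=\hat\beta^{*}/\sqrt{1+\lam_2}$ follows immediately.
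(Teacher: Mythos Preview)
Your approach is correct and is precisely the standard elastic-net augmentation argument of \citet{zou2005elastic} adapted to a general quadratic penalty; the paper in fact gives no proof of this proposition at all, so there is nothing to compare against beyond noting that your route is the intended one. Your ``friction'' paragraph is well placed: the proposition as stated is slightly loose about constants (the $1/(2n)$ prefactor versus the $1/2$ in \eqref{eq beta2}, and the implicit rescaling of $\lam_1$ by $1/\sqrt{1+\lam_2}$), and you are right that these must be absorbed into the interpretation of $P_{\lam_1}$ rather than derived.
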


\section{Comparisons with other Methods}
To gain some insight into the performance of the SLS-GLE, in this section, we demonstrate several features of SLS-GLE and compare with several existing methods. We first state following theoretical support of the Glasso: According to \citet{ravikumar2011high}, with the incoherence assumption and proper choice of the tuning parameter, sample size, we have with high probability that
\[ \|\hat \Theta - \Theta\|_\infty \leqslant K \cdot (\dfrac{\log p}{n})^{1/2} \rightarrow 0 \ \text{as} \ n \rightarrow \infty.\]
In this section, following discussion will be conditional on the true graph estimated precisely by the first stage of SLS-GLE. More details between the true graph and the graph estimation from the first step will be discussed in the Appendix.

\subsection{Comparison with $ l_2 $ Penalty}
We begin by analysing the difference between our method and $ l_2 $-related methods, such as the Elastic Net \citep{zou2005elastic} and Mnet \citep{huang2016mnet}. We assume there exist several groups of variables among which the pairwise conditional dependencies are nonzero and all others are zero. For example, we assume $ X_j $ and $ X_{j'} $ are in one group, then $ \theta_{jk} $ and $ \theta_{j'k} $ are either both equal to zero or not. For simplicity the notational dependence of $ \theta_{j'k} $ and $ \theta_{jk} $, we assume that if $ \theta_{jj'} >0$, then $ \theta_{j'k} $ and $ \theta_{jk} $ can be replaced by each other with a negligible difference; similarly, if $ \theta_{jj'} <0 $, then $ -\theta_{j'k} $ and $ \theta_{jk} $ can be replaced by each other. Then we have the following result to describe the grouping property of the SLS-GLE, and briefly compare it with the Elastic Net.
\begin{prop}\label{prop 1}
Assume $ X_j $ and $ X_{j'} $ are conditional dependence. Set $ z = y - X\hat \beta $ and $z^\en = y - X \hat \beta^\en $ where $ \hat \beta $ is the solution of SLS-GLE, $ \hat \beta^\en $ is the solution of the elastic net. Then we have when $ \theta_{jj'} > 0 $,
\[|\hat \beta_j - \hat \beta_{j'}| = \dfrac{1}{\lam_2 (\sum_{k \neq j}|\theta_{jk}|+ \theta_{jj'})}\big|(X_j - X_{j'})^\t z\big|\]
and
\[|\hat \beta^\en_j - \hat \beta^\en_{j'}| = \dfrac{1}{\lam_2} \big|(X_j - X_{j'})^\t z^\en\big|,\]
when $ \theta_{jj'} <0 $,
\[|\hat \beta_j + \hat \beta_{j'}| = \dfrac{1}{\lam_2 (\sum_{k \neq j}|\theta_{jk}|- \theta_{jj'})}\big|(X_j + X_{j'})^\t z\big|\]
and
\[|\hat \beta^\en_j + \hat \beta^\en_{j'}| = \dfrac{1}{\lam_2} \big|(X_j + X_{j'})^\t z^\en\big|.\]
\end{prop}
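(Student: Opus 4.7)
The plan is to derive the proposition from the first-order (KKT) optimality conditions of the SLS-GLE and Elastic Net objectives, writing stationarity equations coordinate-by-coordinate and then combining those for indices $j$ and $j'$. First I would compute the gradient of the Laplacian quadratic $Q(\beta)=\tfrac{\lam_2}{2}\sum_{k<\ell}|\hat\theta_{k\ell}|(\beta_k-s_{k\ell}\beta_\ell)^2$: using the symmetry $s_{k\ell}=s_{\ell k}$, a short expansion gives $\partial Q/\partial\beta_j=\lam_2\sum_{k\neq j}|\hat\theta_{jk}|(\beta_j-s_{jk}\beta_k)$. Combined with the least-squares gradient and the $\ell_1$ subgradient, the KKT condition at any nonzero coordinate reads
\[
-X_j^\t z + \lam_1\,\text{sign}(\hat\beta_j) + \lam_2\sum_{k\neq j}|\hat\theta_{jk}|(\hat\beta_j-s_{jk}\hat\beta_k)=0,\qquad z:=y-X\hat\beta,
\]
with an analogous equation at $j'$. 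Since Section~3's preamble conditions on the first stage of SLS-GLE recovering the true graph, I would then replace every $\hat\theta_{jk}$ by $\theta_{jk}$ (and every $s_{jk}$ by $\text{sign}(\theta_{jk})$).

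For the case $\theta_{jj'}>0$, I would subtract the two KKT equations. The assumption that $\theta_{jk}$ and $\theta_{j'k}$ can be substituted for each other yields $|\theta_{jk}|=|\theta_{j'k}|$ and $s_{jk}=s_{j'k}$ for every $k\neq j,j'$, so every cross-term involving some $\hat\beta_k$ with $k\notin\{j,j'\}$ collapses to $\sum_{k\notin\{j,j'\}}|\theta_{jk}|(\hat\beta_j-\hat\beta_{j'})$; splitting off $k=j'$ from the first sum and $k=j$ from the second (with $s_{jj'}=+1$) contributes an extra $2\,|\theta_{jj'}|(\hat\beta_j-\hat\beta_{j'})$. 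Assuming $\hat\beta_j$ and $\hat\beta_{j'}$ share a sign, so that the $\lam_1$ subgradient terms drop out, this leaves
\[
(X_j-X_{j'})^\t z=\lam_2(\hat\beta_j-\hat\beta_{j'})\Bigl[\sum_{k\neq j}|\theta_{jk}|+\theta_{jj'}\Bigr],
\]
and taking absolute values produces the first identity. For $\theta_{jj'}<0$, I would add the two KKT equations instead; now $s_{jk}=-s_{j'k}$ and $s_{jj'}=-1$, so the same bookkeeping produces $(\hat\beta_j+\hat\beta_{j'})$ with pre-factor $\sum_{k\neq j}|\theta_{jk}|-\theta_{jj'}$, provided $\hat\beta_j$ and $\hat\beta_{j'}$ carry opposite signs. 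The two Elastic Net formulas fall out from exactly the same subtract/add trick applied to its KKT equation $-X_j^\t z^\en+\lam_1\,\text{sign}(\hat\beta_j^\en)+\lam_2\hat\beta_j^\en=0$, whose ridge part contributes only the diagonal $\lam_2\hat\beta_j^\en$ and hence lacks the $\sum_k|\theta_{jk}|$ pre-factor.

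These calculations are essentially the Zou--Hastie grouping-effect argument adapted to the Laplacian quadratic, so the main obstacle is not algebraic but the handling of two implicit assumptions that must be in force for the equalities to be sharp: first, treating the approximate identities $\theta_{jk}\approx\pm\theta_{j'k}$ as exact equalities for $k\neq j,j'$ (the proposition's statement already adopts this convention); second, ensuring that $\hat\beta_j$ and $\hat\beta_{j'}$ are simultaneously nonzero with matching or opposite signs according to $\text{sign}(\theta_{jj'})$, so that the $\ell_1$ subgradients cancel cleanly when the two KKT equations are combined. I would state both as standing hypotheses at the start of the proof and then execute the short computations sketched above.
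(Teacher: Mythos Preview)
Your proposal is correct and essentially identical to the paper's proof: both derive the coordinate KKT equation $-X_j^\t z+\lam_1\sign{\hat\beta_j}+\lam_2\hat\beta_j\sum_{k\neq j}|\theta_{jk}|-\lam_2\sum_{k\neq j}\theta_{jk}\hat\beta_k=0$, then subtract (respectively add) the equations at $j$ and $j'$, invoke the interchangeability assumption $\theta_{jk}\approx\pm\theta_{j'k}$ to collapse the cross terms, and cancel the $\lam_1$ subgradients under the appropriate sign hypothesis. The only cosmetic difference is that the paper first solves for $\hat\beta_j$ and then takes the difference via a decomposition $H_1+H_2+H_3$, whereas you subtract the raw KKT equations before isolating $\hat\beta_j-\hat\beta_{j'}$; the algebra is the same.
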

In Proposition~\ref{prop 1}, $ \sum_{k \neq j}|\theta_{jk}| $ implies the influence of $ X_j $ affecting other variables in the graph.
According to our assumption, it can be seen as a measure of the important pattern of both $ X_j $ and $ X_{j'} $. $\theta_{jj'}$ implies the conditional dependence between $X_j$ and $X_{j'}$. When $ \theta_{jj'} > 0 $, the difference between two estimates, $ \hat \beta_j $ and $ \hat \beta_{j'} $, decreases as $ X_{j} $ (or $ X_{j'} $) becomes more important in the graph or the positive dependence between $ X_j $ with $X_{j'}$ becomes stronger, while at the same time, $\hat \beta_j + \hat \beta_{j'}$ decreases when $ \theta_{jj'} <0 $.

Comparing with the Elastic Net, we see another significant advantage of the SLS-GLE on grouping effect, that is, when a group is important to the response, by means one or more variables in the group identified as relevant, then the rest variables in this group are hard to be identified as irrelevant.
We can see how we obtain this result from a simple example, i.e., if we need to achieve $ \hat \beta_j =0 $, the following must hold
\[|X^\t_j(y - X\hat \beta)+\lam_2 \sum_{ k \neq j} \theta_{jj'}\hat \beta_k | \leqslant \lam_1.\]
As a contrast, we have $ \hat \beta^\en_j = 0 $ as long as the following holds
\[|X^\t_j(y - X\hat \beta)| \leqslant \lam_1.\]

\subsection{Comparison with other Adjacency Matrix}
There has been much work on modeling the data with correlated covariates, among, many of them consider using the Laplacian matrix associated with the covariates to promote the similarities among coefficients. The main difference between each method is the different constructions of the adjacency matrix. We review results from the literature, and give a detailed discussion to show that the precision matrix is more suitable for estimating the sparse regression with complex datasets. We will exhibit our points in the following.

We begin by recalling several forms of adjacency measures which are proposed and used in \citet{huang2011laplacian} and \citet{daye2009fusion}. We apply their notations and denote the adjacency matrix by $ A = (a_{jj'})_{jj'=1,\ldots,p} $. For the Laplacian quadratic penalty:
\[ \beta^\t \Gamma \beta = \sum\limits_{1 \leqslant j < j' \leqslant p} | a_{jj'}|(\beta_j - s_{jj'}\beta_{j'})^2.\]
There are five adjacency measures and all calculated by the Pearson correlation $ r_{jj'} $. Among, 1) and 2) are sparse adjacency functions with a  threshold function $ r $ determined based on the Fisher transformation: 1) $ a_{jj'} = \mathbf{1}\{r_{jj'}>r\} $, $ s_{jj'} = 1 $. 2)$ a_{jj'} = s_{jj'} \mathbf{1}\{|r_{jj'}|>r\} $, $ s_{jj'} = \sign{r_{jj'}} $. 3) and 5) are non-sparse adjacency functions: 3) $ a_{jj'}=\max(0, r_{jj'})^k $, $ s_{jj'} = 1 $; 4) $ a_{jj'} = |r_{jj'}|^k $, $ s_{jj'} = \sign{r_{jj'}} $; 5) $a_{jj'}=|r_{jj'}|^k(1-|r_{jj'}|)^{-1}$, $s_{jj'} = \sign{r_{jj'}} $ where $k$ is a positive constant. Comparing with above adjacency matrices, SLS-GLE has several important advantages:
\begin{itemize}
\item Many of adjacency measures do not consider the negative correlations. Our penalty pays attention to distinguish between negative and positive conditional dependencies. For example, reviewing Proposition~\ref{prop 1}, we have when $ \theta_{jj'} <0 $,
\[|\hat \beta_j + \hat \beta_{j'}| = \dfrac{1}{\lam_2 (\sum_{k \neq j}|\theta_{jk}|- \theta_{jj'})}\big|(X_j + X_{j'})^\t z\big|\]
Our penalty makes the difference between $ \hat \beta_j $ and $ \hat \beta_{j'} $ becomes larger when $ |\theta_{jj'}| $ increases, while many Laplacian shrinkage forms lead to the opposite direction.
\item In order to make the problem feasible, we always assume sparsity in the predictor space, i.e., 1) most of the predictors are unrelated to the response; 2) most of the predictors are unrelated to each other. Thus, we need an approach that can automatically perform high dimensional edge selection on the graph structure. In light of this, the Glasso is more suitable than the combination of the Pearson correlation and the threshold function.
\item Furthermore, the sparsity pattern of the estimated precision matrix can be easily tuned by the tuning parameter of Glasso. Its computation cost is smaller than the computation cost of finding a suitable threshold function in dealing with each correlation pattern.
\item Conditional dependence information often outperforms the correlation in constructing an adjacency matrix. For example, considering following dataset:
\[ X_i \leftrightarrow X_j \leftrightarrow X_k \leftrightarrow \cdots \leftrightarrow X_l, \]
where $ 1\leq i,j,k,\ldots,l\leq p $ and the arrow denotes the connection between two variables. If we use sample correlation to produce this predictor graph, above variables are all connected, increasing unnecessary computation burden and the bias on estimation.
\end{itemize}

\section{Theoretical Results}
In this section, we study the theoretical properties of the proposed estimator. We first introduce the following notations:
let $S=\{j:\beta_j \neq 0\}$ with cardinality $|S|=q$, similarly, $\hat S=\{j:\hat\beta_j \neq 0\}$. By construction of the subspaces, $\beta$ and $ X $ can be written in the partitioned form $\beta=(\beta_S^\t,\beta_{S^c}^\t)^\t$, $X = (X_{S},X_{S^c})$, respectively. The set of non-zero entries in the precision matrix is denoted by
\[S_\theta=\{ (j,j')\in\{ j\neq j',\theta_{jj'} \neq 0 \} \cup\{(1,1),\dots,(p,p)\}.\]
Suppose the variables are centered and normalized such that $ \dfrac{1}{n}\sum\limits_{i=1}^n x^2_{ij} =1$ for $j=1, \ldots, p$. Considering the dimensionality $p=O(e^{n^{c_3}})$ and $q=O(n^{c_1})$ where $0<c_1+c_3<1$, we have following assumptions:

\begin{assumption}\label{assum1}
Assume $X$ satisfy the restricted eigenvalue (RE) condition:
\begin{align}
\frac{{\left\| X\Delta \right\|_2^2}}{n} \geqslant K_1 \left\|\Delta \right\|_2^2, ~ \text{for all}~\Delta \in B,
\end{align}
where $K_1>0$ and $B= \{\Delta \in \mR^p:\left\| \Delta_{S^c} \right\|_1 \leqslant 3\left\| \Delta_{S} \right\|\} $.
\end{assumption}
\begin{assumption}\label{assum3}
There exists some $\alpha_2\in (0,1]$ such that
\begin{align}
\max_{e \in {S^c_\theta}} \left\|\rm M_{e}{( \rm M_{S_\theta})^{ - 1}}\right\|_1 \leqslant 1 - \alpha_2.
\end{align}
where $ \rm M_{S_\theta}$ and $\rm M_{e}$ denote the matrices with rows and columns of $\rm M$ indexed by $S_\theta \times S_\theta$ and $ e \times S_\theta $ where $ e \in S^c_\theta $, respectively.
\end{assumption}

Assumption~\ref{assum1} requires a restriction of the generalized Gram matrix to the columns in $ S $ is invertible, which is widely used to bound the $ l_2 $-error between $ \beta $ and the estimate \citep{bickel2009simultaneous,meinshausen2009lasso}.
Assumption~\ref{assum3} is defined in \citet{ravikumar2011high},
which shares an exact parallel with the irrepresentable condition for the Lasso, and limits the influence that the non-edge terms, indexed by $ S^c_\theta $, can have on the edge-based terms, indexed by $ S_\theta $. Above two assumptions are used for SLS-GLE achieving the upper bound of  $ l_2 $-norm error with high probability.

\begin{assumption}\label{assum2}
    Let $C=\dfrac{1}{n} X^\t X$, $C_{S}=\dfrac{1}{n}X_S^\t X_S$, $C_{S^c}=\dfrac{1}{n}X_{S^c}^\t X_S$. There exists a constant $\alpha_1 \in (0,1]$ such that
    \begin{align}
        \|\big(C_{S^c}+\frac{\lam_2}{n} \Gamma_{S^c}\big)\big(C_S+\frac{\lam_2}{n}\Gamma_S \big)^{-1}\big[sign(\beta_S)+\frac{\lam_2}{\lam_1}\Gamma_S \beta_S\big] -\frac{\lam_2}{\lam_1}\Gamma_{S^c} \beta_S \|_\infty<1-\alpha_1.
    \end{align}
    where $\Gamma_S$ and $\Gamma_{S^c}$ denote the matrices with rows and columns of $\Gamma$ indexed by $S \times S$ and $ S^c \times S $, respectively.
\end{assumption}
\begin{assumption}\label{assum4}
    There exist positive constants $c_2$ and $K_2$ such that $c_1+c_3<c_2 < 1$ and
    \begin{align*}
        n^{\frac{1-c_2}{2}}\min_{i\in S}|\beta_i|\geqslant K_2.
    \end{align*}
\end{assumption}
Assumption~\ref{assum2} can be seen as the irrepresentable condition when $\lam_2 =0$. This condition is commonly used in the $ l_1 $ penalized regularizations to recover the true model with probability tending to one, see \citet{zhao2006lasso,wainwright2009sharp,li2010graph}. Assumption~\ref{assum4} requires a small gap between $ \beta_S $ and $ 0 $ and allows the nonzero coefficients tend to zero when $ n \rightarrow \infty$. These two assumptions are used for SLS-GLE obtaining the variable selection consistency. All the Assumptions have appeared frequently in the literature for proving theoretical properties of penalized regularizations. Now we state our major theoretical results.
\begin{thm}\label{thm 1}
Suppose Assumption~\ref{assum1}~-~\ref{assum3} hold. If $\lam_2 \|\beta\|_2\leqslant \lam_1\sqrt{q}$ and $\lam_1=4\sigma\sqrt{n\log{p}}$. Set $K_3=(2+\lam_{\min}(\Gamma))^{-1}K_1$. The following event holds with probability at least $ 1-1/p$:
\[\| \hat\beta-\beta \|_2 \leqslant \dfrac{8\sigma}{K_3} \sqrt{\dfrac{q \log p}{n}}.\]
\end{thm}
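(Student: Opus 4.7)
The plan is to run the standard basic-inequality-plus-restricted-eigenvalue argument familiar from the Lasso, modified to accommodate the Laplacian quadratic term in \eqref{eq beta2}, and to condition throughout on two favourable events: the Ravikumar-type event of Assumption~\ref{assum3} on which the first-stage Glasso recovers the support of $\Theta$ so that $\hat\Gamma$ may be replaced by $\Gamma$ up to a negligible perturbation, and the Gaussian-tail event $\|X^\t\epsilon\|_\infty\le\lambda_1/2$. Setting $\Delta=\hat\beta-\beta$ and using $Q(\hat\beta)\le Q(\beta)$ with the expansions $\|y-X\hat\beta\|_2^2=\|\epsilon-X\Delta\|_2^2$ and $\hat\beta^\t\hat\Gamma\hat\beta=\beta^\t\hat\Gamma\beta+2\beta^\t\hat\Gamma\Delta+\Delta^\t\hat\Gamma\Delta$, the optimality of $\hat\beta$ rearranges to
\[
\|X\Delta\|_2^2+\lambda_2\,\Delta^\t\hat\Gamma\Delta \;\le\; 2\epsilon^\t X\Delta+2\lambda_1\bigl(\|\beta\|_1-\|\hat\beta\|_1\bigr)-2\lambda_2\,\beta^\t\hat\Gamma\Delta.
\]

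I would then bound each term on the right. Because columns are normalized so that $\|X_j\|_2^2=n$, each $(X^\t\epsilon)_j\sim\mN(0,n\sigma^2)$; a Gaussian maximum inequality plus a union bound yields $\|X^\t\epsilon\|_\infty\le 2\sigma\sqrt{n\log p}=\lambda_1/2$ with probability at least $1-1/p$, so Hölder gives $|2\epsilon^\t X\Delta|\le\lambda_1\|\Delta_S\|_1+\lambda_1\|\Delta_{S^c}\|_1$. The $\ell_1$ difference is handled by the usual $\beta_{S^c}=0$ identity, $\|\beta\|_1-\|\hat\beta\|_1\le\|\Delta_S\|_1-\|\Delta_{S^c}\|_1$. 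For the cross term, Cauchy--Schwarz yields $|2\lambda_2\beta^\t\hat\Gamma\Delta|\le 2\lambda_2\|\hat\Gamma\|_\mathrm{op}\|\beta\|_2\|\Delta\|_2$, and the hypothesis $\lambda_2\|\beta\|_2\le\lambda_1\sqrt{q}$ converts this into a quantity of the form $C\lambda_1\sqrt{q}\|\Delta\|_2$, which I would split using $\sqrt{q}\|\Delta_S\|_2\ge\|\Delta_S\|_1$ so as not to disturb the cone decomposition.

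Collecting the estimates and dropping the nonnegative $\lambda_2\Delta^\t\hat\Gamma\Delta$ extracts the cone condition $\|\Delta_{S^c}\|_1\le 3\|\Delta_S\|_1$, placing $\Delta\in B$. On $B$, Assumption~\ref{assum1} gives $\|X\Delta\|_2^2\ge nK_1\|\Delta\|_2^2$, and because $\hat\Gamma$ is positive semidefinite with smallest eigenvalue essentially $\lambda_{\min}(\Gamma)$, the Laplacian term adds $\lambda_2\lambda_{\min}(\Gamma)\|\Delta\|_2^2$ on the left. Combining the lower bound $(nK_1+\lambda_2\lambda_{\min}(\Gamma))\|\Delta\|_2^2$ with the right-hand upper bound of order $\lambda_1\sqrt{q}\|\Delta\|_2$ (using $\|\Delta_S\|_1\le\sqrt{q}\|\Delta\|_2$), dividing through by $\|\Delta\|_2$ and by $n$, and collecting constants into $K_3=(2+\lambda_{\min}(\Gamma))^{-1}K_1$, the substitution $\lambda_1=4\sigma\sqrt{n\log p}$ produces the stated rate.

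The main obstacle is the unsigned cross term $-2\lambda_2\beta^\t\hat\Gamma\Delta$: it is the price paid for biasing the estimator toward graph-smoothness and could in principle dominate, so the key technical move is to use the size hypothesis $\lambda_2\|\beta\|_2\le\lambda_1\sqrt{q}$ to force this term to be of order $\lambda_1\sqrt{q}\|\Delta\|_2$ and to absorb it while preserving the standard $3{:}1$ cone structure. A secondary nuisance is that we only have $\hat\Gamma\approx\Gamma$ through Assumption~\ref{assum3}; one must check that the induced perturbation of size $O(\sqrt{\log p/n})$ does not degrade the RE or Laplacian constants inside the final $1-1/p$ event.
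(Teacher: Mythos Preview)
Your proposal is correct and mirrors the paper's own argument: the paper likewise starts from $F(\hat\beta)\le F(\beta)$, establishes $2\|X^\t\epsilon\|_\infty\le\lambda_1$ with probability at least $1-1/p$ as a separate lemma, isolates the cone condition $\|u_{S^c}\|_1\le 3\|u_S\|_1$ as another lemma, and then combines the RE lower bound with $\|\Delta_S\|_1\le\sqrt q\,\|\Delta\|_2$ and the size hypothesis on $\lambda_2\|\beta\|_2$. The only minor deviation is in the cross term $\beta^\t\hat\Gamma\Delta$: for the cone step the paper uses the H\"older pairing $|\beta^\t\hat\Gamma\Delta|\le\|\hat\Gamma\|_\infty\|\beta\|_2\|\Delta\|_1$ (which merges directly with the $\ell_1$ bookkeeping and preserves the $3{:}1$ ratio), and in the main estimate routes the Laplacian quadratic through $\lambda_{\min}(\hat\Gamma)$ rather than $\|\hat\Gamma\|_{\mathrm{op}}$, which is how $K_3$ picks up the factor $(2+\lambda_{\min}(\Gamma))^{-1}$; your $\ell_2$ bound with the ``split'' via $\sqrt q\,\|\Delta_S\|_2\ge\|\Delta_S\|_1$ runs the inequality in the wrong direction for that particular step, but this is a cosmetic slip, not a structural one.
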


\begin{thm}\label{thm 2}
Suppose Assumption~\ref{assum3}~-~\ref{assum4} hold. If $\lam_2 \|\beta\|_2\leqslant \lam_1 \sqrt{q}$ and $\lam_1=4\sigma\sqrt{n\log{p}}$. We have
\begin{align*}
P(\hat S=S)\geqslant 1-1/p\rightarrow 1, \ as \ n\rightarrow\infty.
\end{align*}
\end{thm}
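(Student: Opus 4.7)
The plan is to adapt the primal-dual witness (PDW) construction to the SLS-GLE setting. First I would introduce a restricted oracle candidate $\tilde\beta$ with $\tilde\beta_{S^c}=0$ and
\[\tilde\beta_S \;=\; \arg\min_{\beta_S\in\mR^q}\Bigl\{\tfrac{1}{2}\|y-X_S\beta_S\|_2^2 + \lam_1\|\beta_S\|_1 + \tfrac{\lam_2}{2}\beta_S^\t\hat\Gamma_S\beta_S\Bigr\},\]
which is the unique minimizer by strict convexity. The KKT system for the full problem~\eqref{eq beta2} reads $X^\t(X\tilde\beta-y)+\lam_2\hat\Gamma\tilde\beta+\lam_1\tilde z=0$ with $\tilde z\in\partial\|\tilde\beta\|_1$; the $S$-block fixes $\tilde z_S$ and the $S^c$-block uniquely determines $\tilde z_{S^c}$. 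It then suffices to verify (i) strict dual feasibility $\|\tilde z_{S^c}\|_\infty<1$, which forces $\tilde\beta$ to be the unique SLS-GLE minimizer and hence $\hat S\subseteq S$, and (ii) $\mathrm{sign}(\tilde\beta_j)=\mathrm{sign}(\beta_j)$ for every $j\in S$, which upgrades this to $\hat S=S$.

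Substituting $y=X_S\beta_S+\e$ and eliminating $\tilde\beta_S-\beta_S$ via the $S$-block of the KKT would give the explicit form
\[\tilde z_{S^c} \;=\; \bigl(C_{S^c}+\tfrac{\lam_2}{n}\hat\Gamma_{S^c}\bigr)\bigl(C_S+\tfrac{\lam_2}{n}\hat\Gamma_S\bigr)^{-1}\Bigl[\mathrm{sign}(\beta_S)+\tfrac{\lam_2}{\lam_1}\hat\Gamma_S\beta_S\Bigr] - \tfrac{\lam_2}{\lam_1}\hat\Gamma_{S^c}\beta_S + \tfrac{1}{\lam_1}W,\]
where $W=X_{S^c}^\t\e-(C_{S^c}+\tfrac{\lam_2}{n}\hat\Gamma_{S^c})(C_S+\tfrac{\lam_2}{n}\hat\Gamma_S)^{-1}X_S^\t\e$ collects the stochastic contributions. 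The deterministic part is precisely the quantity bounded by Assumption~\ref{assum2}, except that $\Gamma$ is replaced by $\hat\Gamma$. The Ravikumar--Wainwright--Yu bound cited in Section~3 gives $\|\hat\Theta-\Theta\|_\infty\leqslant K(\log p/n)^{1/2}$ on a high-probability event $\mathcal{E}_0$ under Assumption~\ref{assum3}; propagating this perturbation through $\hat\Gamma=\hat D-\hat\Theta$ and through the inverse by a Neumann expansion (whose convergence follows from the invertibility implicit in Assumption~\ref{assum2}) shows that on $\mathcal{E}_0$ the deterministic part of $\tilde z_{S^c}$ stays below $1-\alpha_1/2$ in $\ell_\infty$. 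The calibration $\lam_2\|\beta\|_2\leqslant\lam_1\sqrt{q}$ is precisely what keeps the cross term $(\lam_2/\lam_1)\hat\Gamma\beta_S$ negligible within this slack.

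For the stochastic remainder $W$, each coordinate is a centred Gaussian with variance $O(n\sigma^2)$ (using $\|X_j\|_2^2=n$ and an operator-norm bound on the bracketed matrix); a Gaussian maximal inequality combined with a union bound over the at most $p$ indices of $S^c$ yields $\|W\|_\infty/\lam_1\leqslant\alpha_1/4$ with probability at least $1-1/(2p)$ under the prescribed $\lam_1=4\sigma\sqrt{n\log p}$. This delivers (i). For (ii), a parallel $\ell_\infty$ analysis of the $S$-block KKT equation yields $\|\tilde\beta_S-\beta_S\|_\infty=O(\sqrt{\log p/n})=O(n^{(c_3-1)/2})$; since $c_1+c_3<c_2$ implies in particular $c_3<c_2$, Assumption~\ref{assum4}'s lower bound $\min_{j\in S}|\beta_j|\geqslant K_2 n^{-(1-c_2)/2}$ strictly dominates this deviation, forcing the signs to agree. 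Intersecting the two high-probability events then yields $P(\hat S=S)\geqslant 1-1/p\to 1$.

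The principal obstacle I anticipate is controlling $\|\tilde z_{S^c}\|_\infty$ when the Laplacian inside the estimator is the \emph{sample} $\hat\Gamma$ while Assumption~\ref{assum2} is phrased in terms of the \emph{population} $\Gamma$. The $\hat\Gamma$-vs-$\Gamma$ perturbation sits both inside a matrix inverse and outside (multiplying $\beta_S$), so a naive bound can scale with $\|\beta_S\|_1$ and swamp the slack $\alpha_1$. Making the perturbation argument tight requires simultaneously invoking Assumption~\ref{assum3} for the Glasso $\ell_\infty$ rate and the calibration $\lam_2\|\beta\|_2\leqslant\lam_1\sqrt{q}$ to keep the graph-Laplacian cross terms at the same $\sqrt{n\log p}$ scale as the noise — a delicate balance that is the crux of the argument.
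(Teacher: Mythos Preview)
Your proposal is correct and follows essentially the same route as the paper. The paper packages the KKT/primal--dual witness argument into a separate lemma (their Lemma~\ref{lem sign}) that reduces $\{\hat S=S\}$ to the intersection of two events $A$ and $B$ that are exactly your sign--consistency and strict--dual--feasibility conditions, then bounds $P(A^c)+P(B^c)$ by Gaussian tails using Assumption~\ref{assum4} and the choice of $\lambda_1$; the deterministic bias term $b$ they isolate is precisely your deterministic part of $\tilde z_{S^c}$, handled via Assumption~\ref{assum2}. The one place your write-up is arguably more careful than the paper's is the $\hat\Gamma$--versus--$\Gamma$ perturbation: the paper dispatches it with a single appeal to Lemma~\ref{lem Omega} (``when $n$ is large''), whereas you correctly flag that the perturbation sits inside an inverse and is multiplied by $\beta_S$, and that the calibration $\lambda_2\|\beta\|_2\leqslant\lambda_1\sqrt q$ is what keeps this under control.
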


\section{Simulations and Application}
\subsection{Computations}
To compute the proposed method, we adopt the coordinate wise descent algorithm \citep{friedman2007pathwise,friedman2010regularization}.
This algorithm is originally proposed for the Lasso, and has been widely applied to calculate many estimates, such as the Fused Lasso, LAD-Lasso, Elastic Net, MCP, etc. It is very competitive with other procedures in the high-dimensional setting for the simple calculations and fast iterations, and also apply to our method.

The computation details are given in Algorithm 1. As mentioned above, the optimization problem \eqref{eq beta} is equal to the optimization problem \eqref{eq beta2} and we use $\hat\tau_{jj'}$ as the element of $\hat\Gamma$ where $ j,j'=1,...,p $.
\begin{center}
\textbf{Algorithm 1}
\end{center}
\begin{description}
    \item ~~Given $\lambda_1$, $\lambda_2$ and an initial value of $\hat\beta^{(0)}$.
    \item ~~Step 1: Solve the first step \eqref{eq theta} by the Glasso and calculate the Laplacian matrix $\hat\Gamma$ by the graph estimation $ \hat \Theta $.
    \item ~~Step 2: Set $\hat \beta^{(m)} \gets \hat\beta^{(m-1)}$. Visit all entries of $\hat\beta^{(m)}$. For each entry $j$, update $\hat\beta^{(m)}_j$ with the minimizer of the objective function along its coordinate direction given by
    \[\hat \beta_j^{(m)} \gets \dfrac{S(\sum\limits_{i=1}^n x_{ij}(y_i-\widetilde{y}_i^{(m)})-\lam_2 \sum\limits_{j'\neq j} \hat\tau_{jj'}\hat\beta_{j'}^{(m)},\lam_1 )}{1+\lam_2 \hat\tau_{jj}}, \]
    where $\widetilde{y}_i^{(m)}=\sum\limits_{j'\neq j} x_{ij'}\hat\beta_{j'}^{(m)}$ and the function $S(a,b)=\sign{a}(|a|-b)_+$.
    \item ~~Step 3: if $\sum\limits_j |\hat\beta_j^{(m)}-\hat\beta_j^{(m-1)}|< \e$ then stop, otherwise go to step 2.
\end{description}

Algorithm 1 is guaranteed to converge to the global minimizer as the give $\hat \Gamma$ in step 1 is nonnegative definite. We set the convergence tolerance parameter $\e=10^{-4}$ and $\hat\beta^{(0)}=\frac{X^\t y}{n}$.

\subsection{Some Numerical Experiments}
In this part, we conduct simulations to illustrate the performance of the proposed method, comparing with other methods: 1-2) SLS with two adjacency functions \citep{huang2011laplacian}; 3) Penalized regularization based on estimated covariance matrix (PRECM) without total variation penalty\citep{li2018graph}; 4) Weighted fusion \citep{daye2009fusion}, 5) Elastic Net and 6) Lasso.
% ###### Sat Mar 23 15:34:18 CST 2019
We use four examples to state the performance of each method. During all the examples, we fix $ p = 300 $ while $ n=\{50,60,70,80,90,100\} $ so that $ p \gg n $. The elements of $ \beta $ are either equal to 0 or 1.5, and we set $ q = 10 $ to control the sparsity of $ \beta $. Here are the details of four scenarios.
\begin{description}
    \item Example 1. $ \Sigma $ is block diagonal with same 60 blocks, each of size $ 5 \times 5 $. All the diagonal elements of blocks are 1 and off-diagonal elements are equal to $ 0.8^{|j - j'|} $, where $ j,j'=1,...,5 $.
	\item Example 2. Set the correlations as $ \rho_{jj'} =0.8^{|j - j'|} $ for $ j,j'=1,...,p $.
	\item Example 3. $ \Theta $ is block diagonal with same 60 blocks, which are $ 5 \times 5 $ matrices. All the diagonal elements of blocks are 1 and the first off-diagonal elements are 0.5, other elements of blocks are equal to zero, i.e. Let $ B = \{b_{jj'}\}_{jj'=1,...,5} $ denotes the block and we have
	\[b_{jj'} = \left\{\begin{tabular}{ll}
	1,& if $ j=j' $\\
	0.5,& if $ |j-j'|=1 $\\
	0,& others.
	\end{tabular}\right.\]
	\item Example 4. Set $\Theta = F + \delta I$, where the diagonal entries of $F$ are all equal to zero, and each off-diagonal entry is generated independently and equals 0.5 with probability $0.02$ and $0$ with probability $0.98$. $ \delta $ is chosen such that the conditional number is equal to $ p $, and the matrix is then standardized to have unit diagonals.
\end{description}
In Example 1 and Example 2, we consider the different structures of the covariance matrix. For the other two cases, we consider the different structures of the precision matrix.
The tuning parameters are selected using BIC and all simulations are repeated 100 times. We use the $ l_2 $-norm and MSE to evaluate the performance of the estimate changes with the number of observations. The results are shown in Figure~\ref{fig1} and Figure~\ref{fig2}, while out-of-range values are deleted from the figures. Among, SLS-N2 and SLS-N4 are short for the SLS with adjacency function measured by N.2. and N.4. in \citet{huang2011laplacian} respectively.

As we can see from Figure~\ref{fig1} and Figure~\ref{fig2}, SLS-GLE performs the best among seven methods, especially in Example 1 and Example 3. Besides, it is easy to see that both $l_2$-norm and MSE of each method usually decrease as $n$ increase, which is obvious since as $n$ increases, we can get more information from data.

\subsection{Application to Real Data}
In this part, we apply our estimation scheme to the real dataset from the financial market. We aim to modeling the relationship between the index S$\&$P500 and its constituent stocks.
Financial market data structure is complex, where assets move in relation to each other, and random market fluctuations make it difficult to forecast the market's direction. To describe the market, investors and managers use the stock index, which is a measurement of the stock market computing from the selected stocks. Among, S$\&$P500 is a common index and benchmark for the American stock market, and is consisted of 500 largest companies. Since the sample size is often less than one hundred due to the time availability, this is a typical high-dimensional problem and it requires a sparse solution for the cost concern.

The data come from TXDB, containing the prices of stocks in S$\&$P500 from Jan. 2014 to Oct. 2018. The data is divided by time window, 5 months' data ($n=100$) for modeling and one month's data ($n=20$) for forecasting, which produces 52 forecasting sample sets. Let $x_{i,t}$ represents the price of $i$th constituent stock, $i=1,\ldots,500$ and $y_t$ represents the price of the index. We describe the relationship between $x_{i,t}$ and $y_t$ by a linear regression model.
We use the Annual Tracking Error (ATE) to be the measurement, which is a deviation measure of the return for the replication from the index, that is: $ \textmd{Tracking Error}_{\textmd{Year}} =\sqrt{252} \sqrt {\frac{\sum (\textmd{error}_i - \textmd{mean(error)})^2}{T - 1}} $, where $ \textmd{error}_i=r_i-\hat r_i $ and $r_t=\frac{y_t-y_{t-1}}{y_{t-1}},~t=1,\ldots,T$

We present the forecasting result in Table~\ref{table1} and Figure~\ref{fig3} - Figure~\ref{fig4}.
Table~\ref{table1} shows the predicted results of different methods with different subset sizes (20,40,60,80) in recent segments (Jan. 2018 - Aug. 2018 for modeling and Sept. 2018 for forecasting). As shown in Table~\ref{table1}, the proposed method nearly outperforms the other methods in forecasting the index, and its predicted ATE are nearly between 2\% and 5\%, which are qualified as an index fund in the market. Several mutual index fund achieves this standard by using all the constituent stocks.
As shown in Figure~\ref{fig3} and Figure~\ref{fig4}, we show the performance of SLS-GLE forecasting the index by selecting 50 and 70 constituent stocks during Jun. 2014 to Sept. 2018. The first row of both figures shows the predicted ATE of first 14 segments (Jun. 2014 - Feb. 2015). We can see that SLS-GLE forecasts the target index well and obtains smaller errors when we increase the selected set.

\section{Conclusions and Discussion}
We have proposed a procedure named SLS-GLE for correlated features in sparse high dimensional regression models. SLS-GLE uses the Gaussian graphical model to estimate the specific information on the conditional dependence pattern among the predictors. By combining the strengths of the Glasso and the Laplacian matrix penalty, the SLS-GLE procedure is intuitive, theoretically justified and easy to implement.

Our results offer new insights into the $ l_1 $-related method encouraging the grouping effect, and support a novel strategy of combining the regression model and the graphical model to improve the accuracy of complex datasets. For both theoretical analysis and numerical comparisons, we have carefully discussed the features of the SLS-GLE. It worth emphasizing that the SLS-GLE encourages sparsity on both regression model and graphical model, and aims to improve the estimation and prediction of features which are connected with each other.

There are several other open questions that we leave for future research. For example, we did not provide any discussion on the errors' distribution. Further, complex datasets not only denotes the data type with complex relationships between variables, but also includes heavy-tailed noise, influential observations, etc. For such complex data, fitting the model based on a clean data assumption may leads to a completely wrong solution. An effective deletion method or procedure like SLS-GLE, which achieves good estimation by measuring the specific information of the complexity, become more and more important in statistical analysis and many applications.

\section*{Acknowledgement}
This work was supported by the National Natural Science Foundation of China (Grant No. 11671059).
\clearpage

\begin{figure}[ht]
\centering
\subfigure[Example 1]
{\includegraphics[width=.48\textwidth,height=.5\columnwidth]{./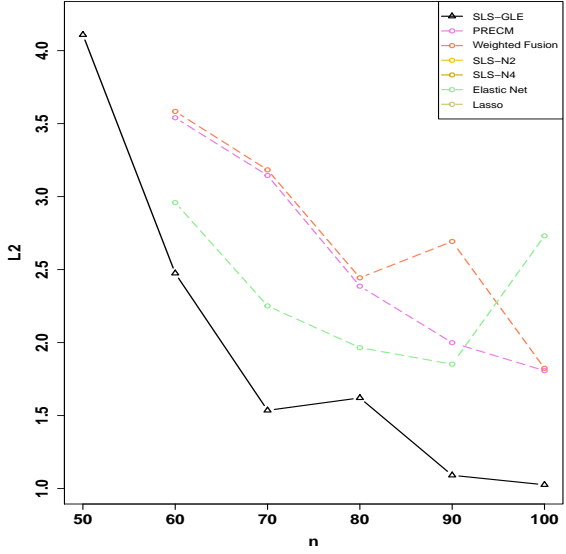}}
\subfigure[Example 2]
{\includegraphics[width=.48\textwidth,height=.5\columnwidth]{./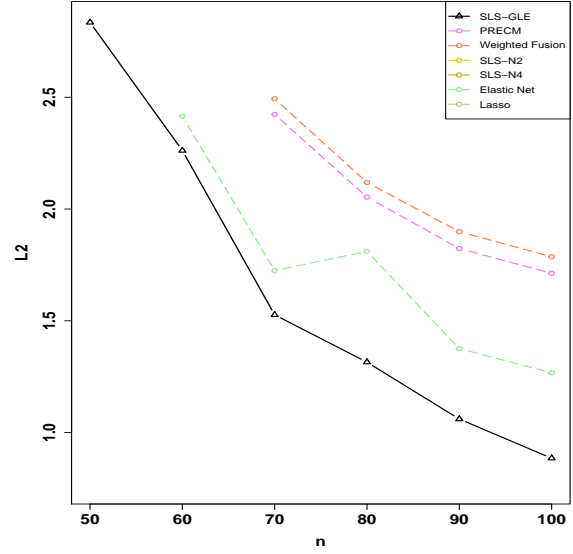}}
\subfigure[Example 3]
{\includegraphics[width=.48\textwidth,height=.5\columnwidth]{./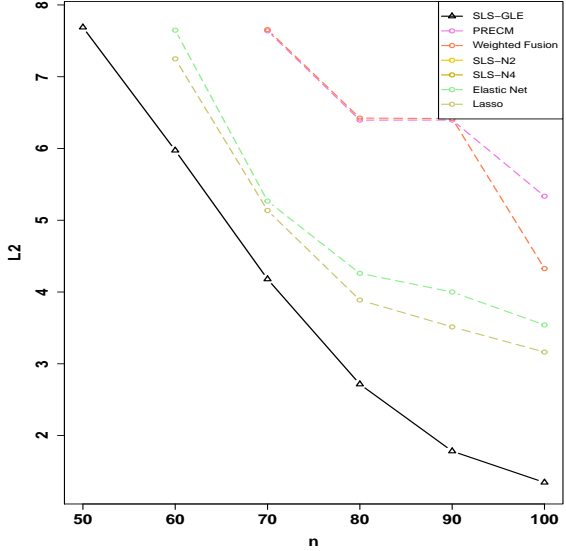}}
\subfigure[Example 4]
{\includegraphics[width=.48\textwidth,height=.5\columnwidth]{./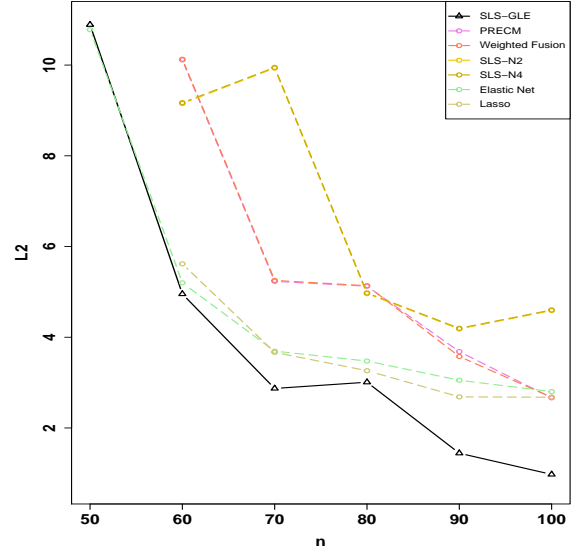}}
\caption{Results for four Examples. The black line shows the results for SLS-GLE and the lines with other colors correspond to the other methods: SLS-N2, SLS-N4, PRECM, Weighted Fusion, Elastic Net and Lasso. The vertical axis is the $l_2$ norm error and the horizontal axis corresponds to the number of samples ($n=50,60,70,80,90,100)$. }
\label{fig1}
\end{figure}
\clearpage

\begin{figure}[ht]
\centering
\subfigure[Example 1]
{\includegraphics[width=.48\textwidth,height=.5\columnwidth]{./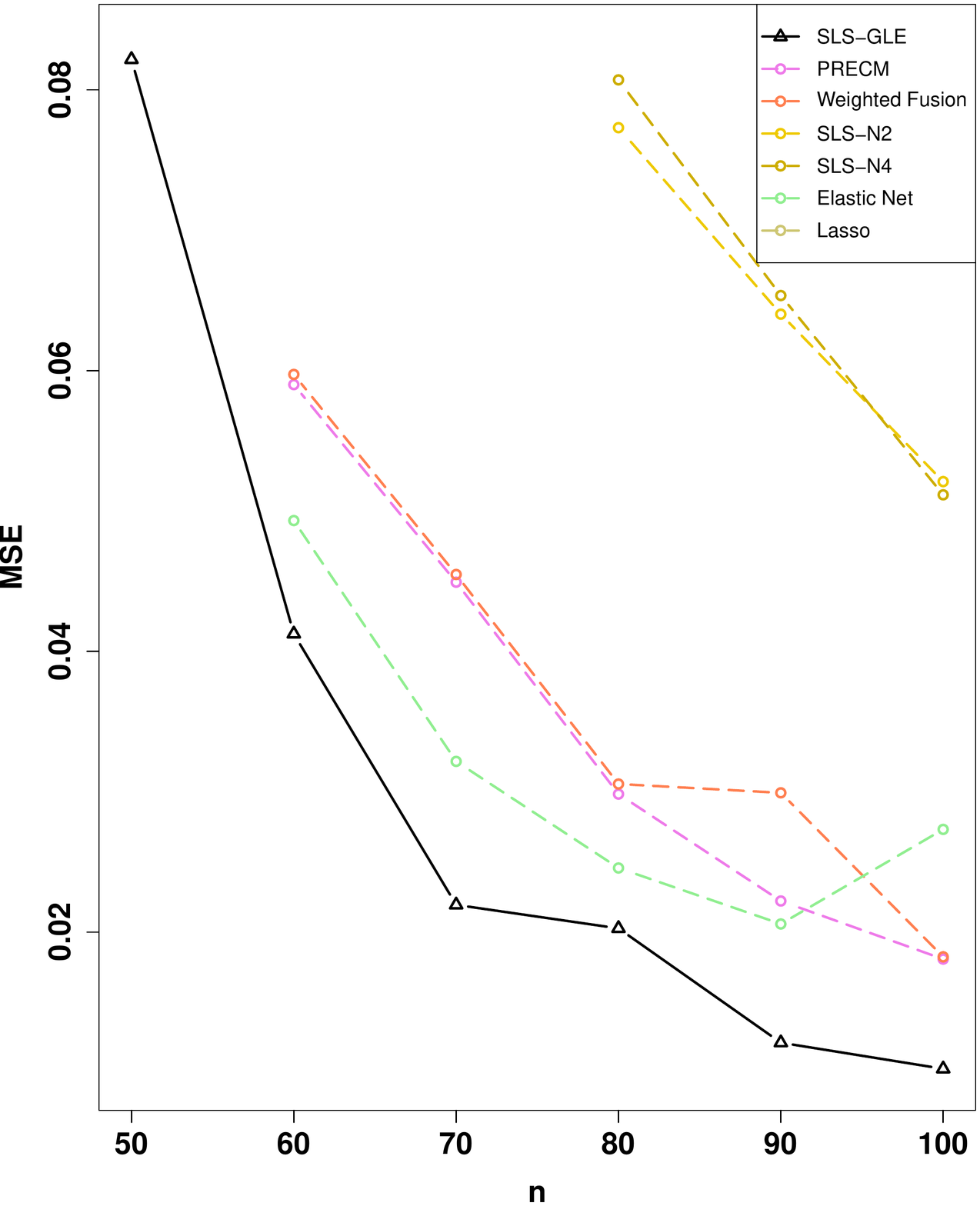}}
\subfigure[Example 2]
{\includegraphics[width=.48\textwidth,height=.5\columnwidth]{./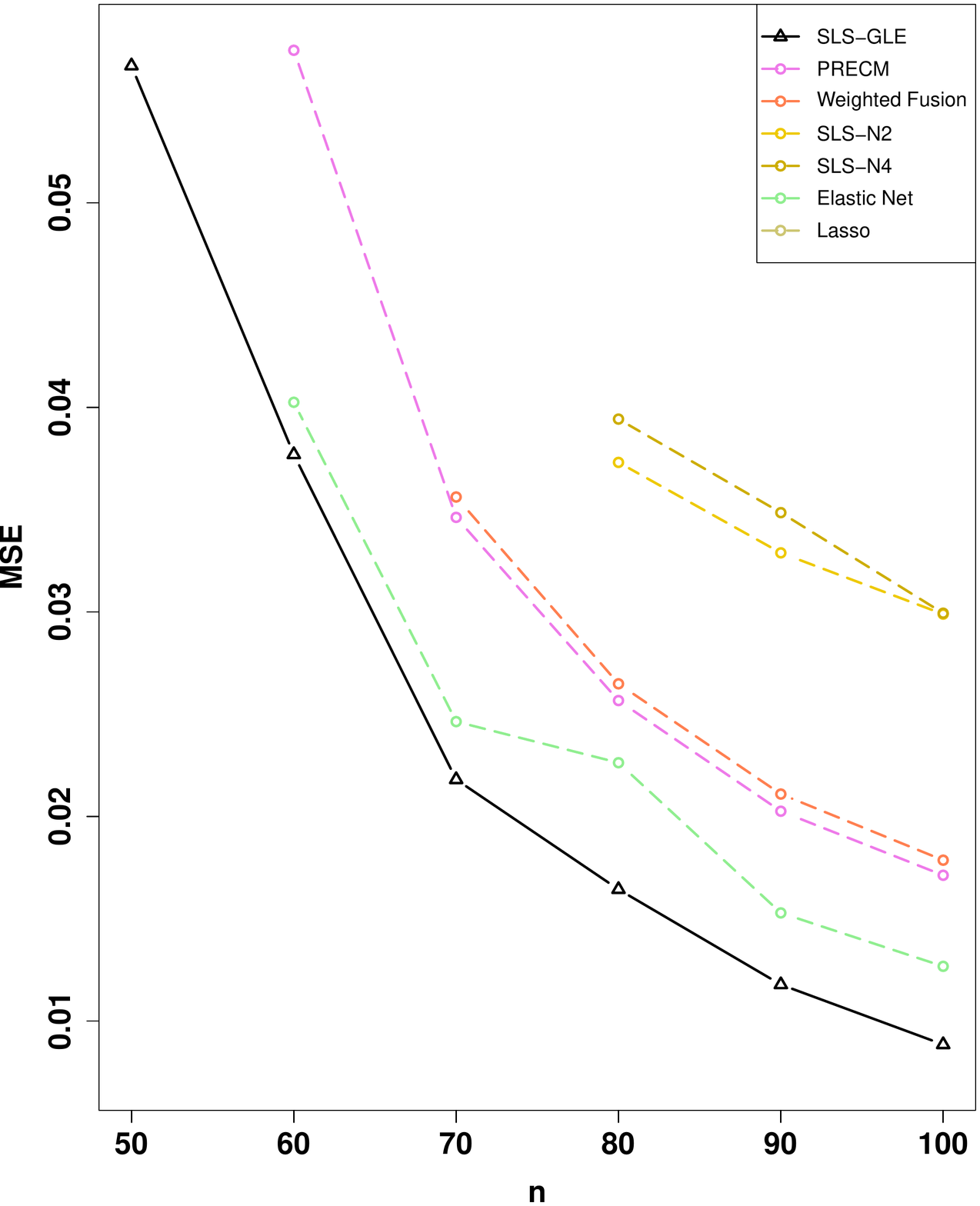}}
\subfigure[Example 3]
{\includegraphics[width=.48\textwidth,height=.5\columnwidth]{./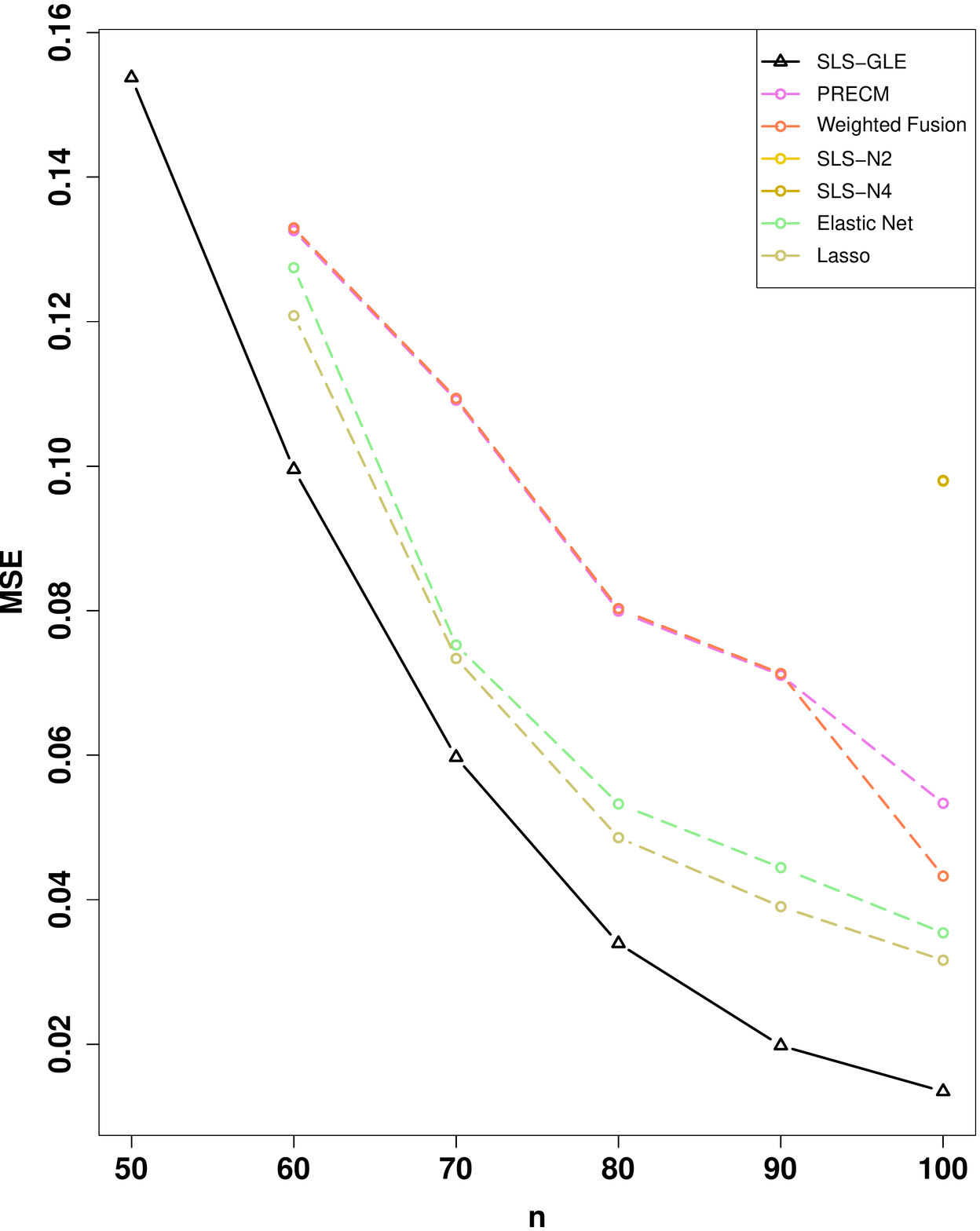}}
\subfigure[Example 4]
{\includegraphics[width=.48\textwidth,height=.5\columnwidth]{./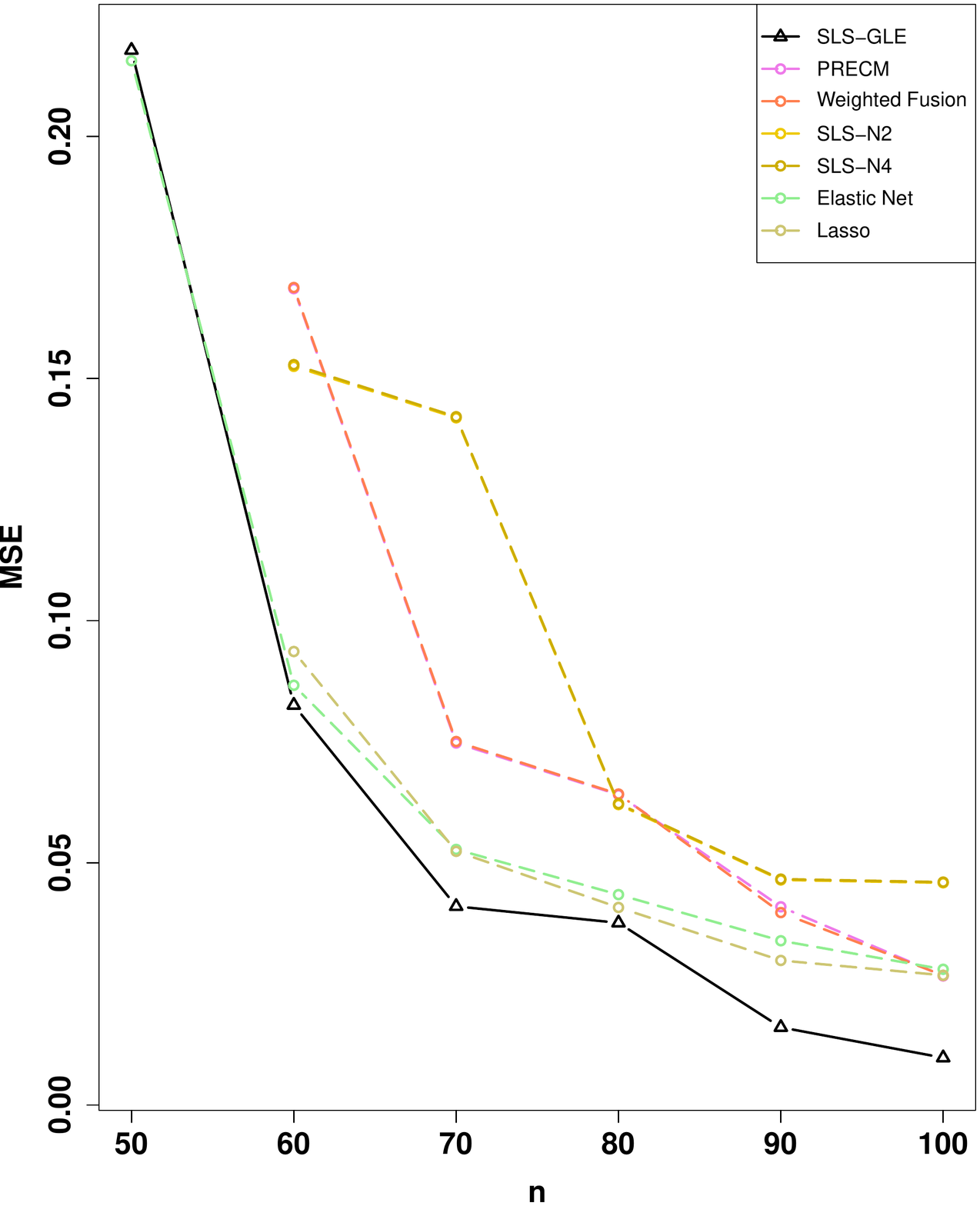}}\caption{Results for four Examples. The black line shows the results for SLS-GLE and the lines with other colors correspond to the other methods: SLS-N2, SLS-N4, PRECM, Weighted Fusion, Elastic Net and Lasso. The vertical axis is the MSE error and the horizontal axis corresponds to the number of samples ($n=50,60,70,80,90,100)$. }
\label{fig2}
\end{figure}
\clearpage

\begin{table}[ht]
	\centering
	\small
	\renewcommand\arraystretch{1.5}
	\setlength\tabcolsep{1pt}
	\caption{The predicted Annual Tracking Error(ATE).}
	\label{table1}
	\begin{tabular}{cccccc}
		\hline
		Selected stocks&Method& ATE&Selected stocks&Method& ATE\\
		\hline
		20& SLS-GLE & 4.39\% & 40 & SLS-GLE & 2.93\%\\
	      & PRECM & 4.31\% &    & PRECM & 2.83\%\\
	      & Weighted Fusion & 4.31\% &    & Weighted Fusion & 2.79\%\\
	      & SLS-N2 & 7.14\% &    & SLS-N2 & 6.28\%\\
	      & SLS-N4 & 7.14\% &    & SLS-N4 & 6.29\%\\
	      & Elastic Net & 10.48\% &    & Elastic Net & 9.7\%\\
	      & Lasso & 4.84\% &    & Lasso & 3.36\%\\
	    %   & MCP.P & 7.88\% &    & MCP.P & 11.78\%\\
	      \hline
		Selected stocks&Method& ATE&Selected stocks&Method& ATE\\
		\hline
		60& SLS-GLE & 2.58\% & 80 & SLS-GLE &2.19\%\\
		  & PRECM & 2.96\% &    & PRECM & 2.6\%\\
		  & Weighted Fusion & 2.96\% &    & Weighted Fusion & 2.62\%\\
		  & SLS-N2 & 6.78\% &    & SLS-N2 & 9.06\%\\
		  & SLS-N4 & 6.78\% &    & SLS-N4 & 9.06\%\\
		  & Elastic Net & 8.42\% &    & Elastic Net & 7.05\%\\
		  & Lasso & 3.11\% &    & Lasso & 2.65\%\\
		%   & MCP.P & 8.44\% &    & MCP.P & 12.92\%\\
	    \hline
	\end{tabular}
\end{table}
\clearpage

\begin{figure}[ht]
	\centering
	\subfigure[The predicted ATEs of different methods]
	{\includegraphics[width=\textwidth,height=.35\columnwidth]{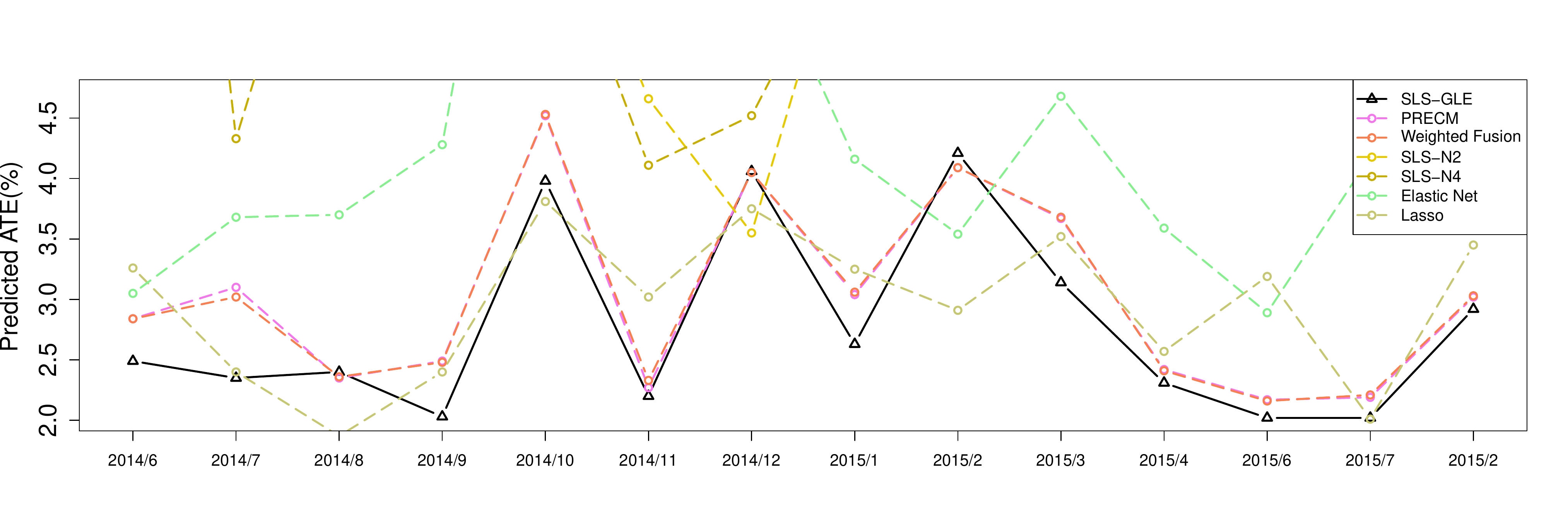}}
	\subfigure[Index tracking of SLS-GLE]
	{\includegraphics[width=\textwidth,height=.35\columnwidth]{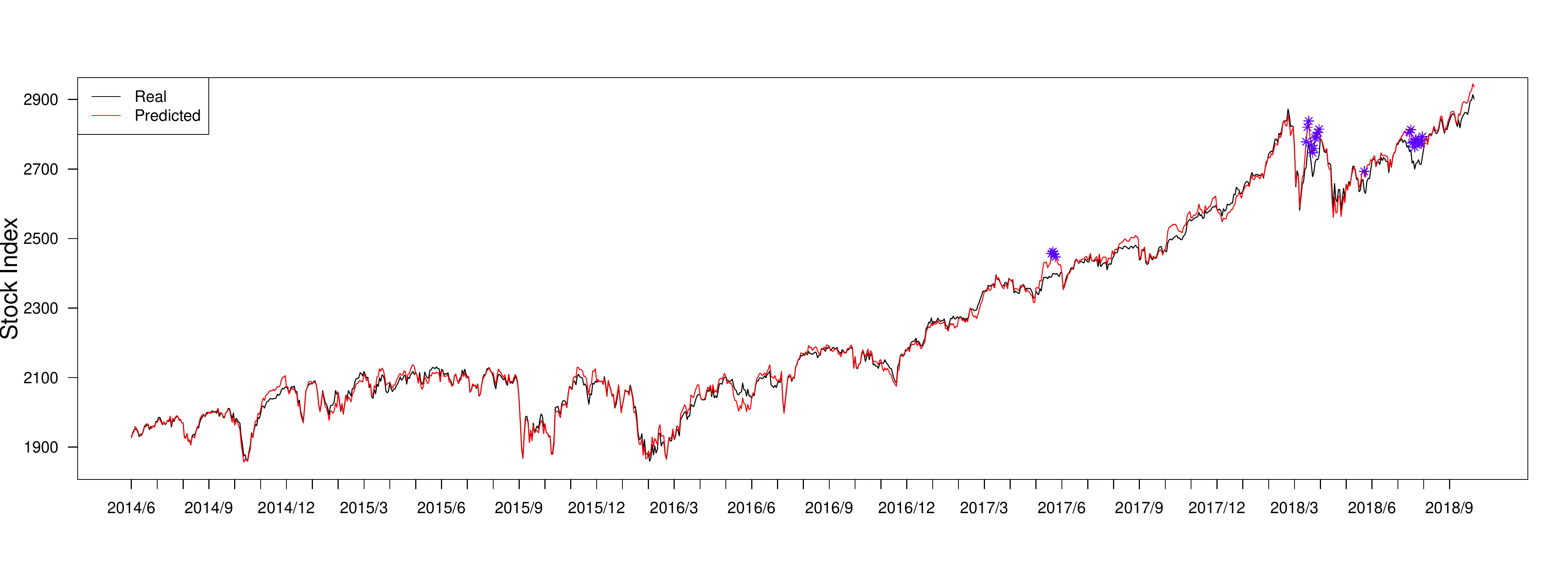}}
	\caption{The performance of selecting 50 stocks in index tracking. The first row shows the predicted ATE of SLS-GLE. The second row shows the forecasting index tracking result. Among, the blue dots denote the location where the absolute error between the predicted index and the real index exceeds 50.}
	\label{fig3}
\end{figure}
\clearpage

\begin{figure}[ht]
	\centering
	\subfigure[The predicted ATEs of different methods]
	{\includegraphics[width=\textwidth,height=.35\columnwidth]{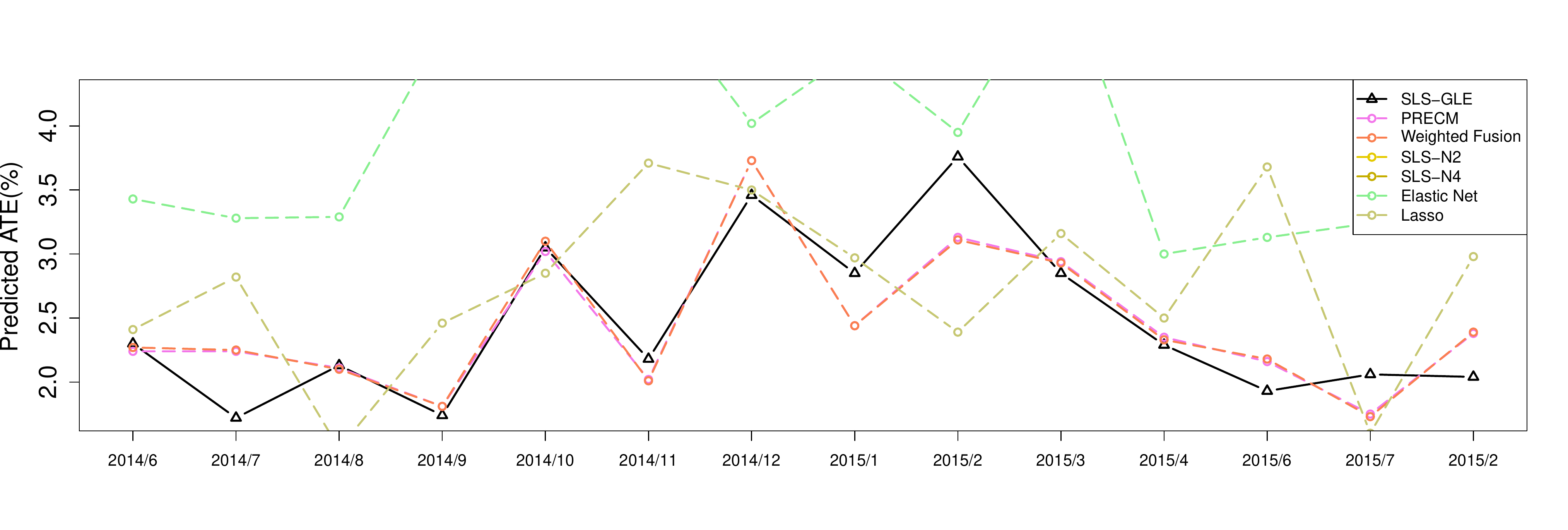}}
	\subfigure[Index tracking of SLS-GLE]
	{\includegraphics[width=\textwidth,height=.35\columnwidth]{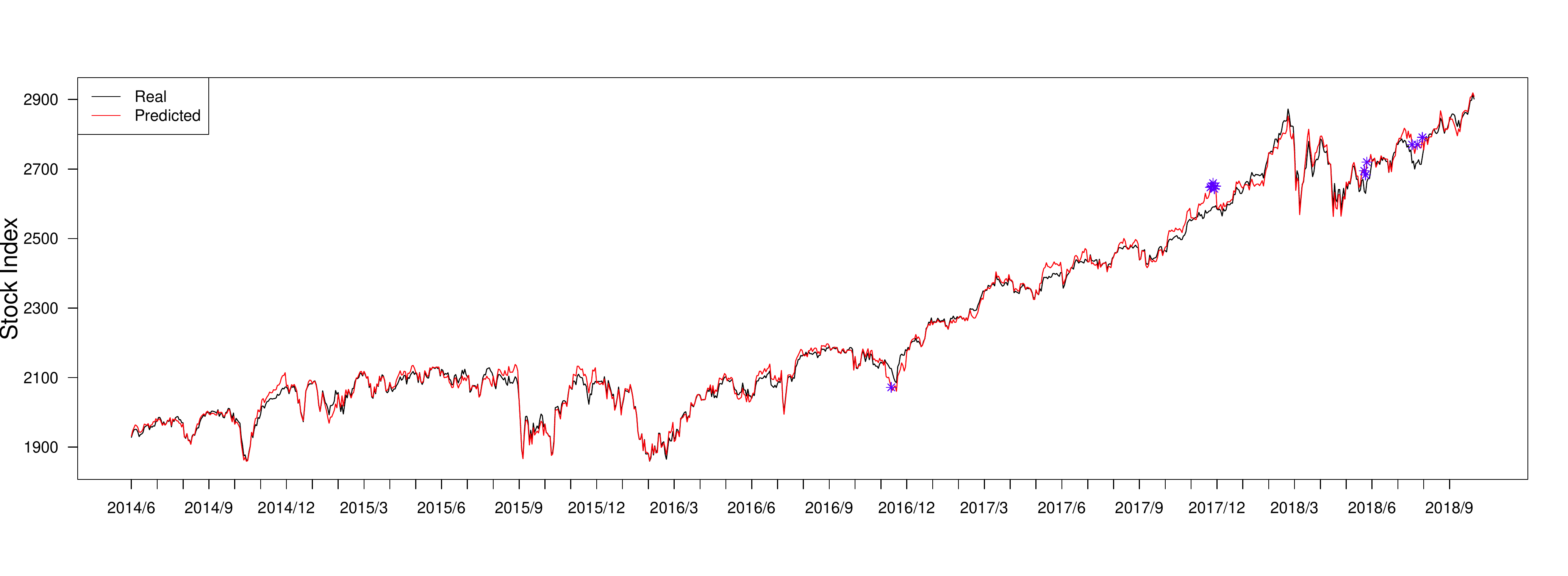}}
	\caption{The performance of selecting 70 stocks in index tracking. The first row shows the predicted ATE of SLS-GLE. The second row shows the forecasting index tracking result. Among, the blue dots denote the location where the absolute error between the predicted index and the real index exceeds 50.}
	\label{fig4}
\end{figure}
\clearpage

\appendix
\section*{Appendix}

\noindent \textbf{Laplacian Matrix}, proposed by \citep{chung1997spectral}, is a symmetric matrix representation of a graph. Given a graph, its Laplacian matrix $ \Gamma $ is defined as
\[\Gamma = D - A,\]
where $ A=(a_{j,j'}) $ is the adjacency matrix and $ D $ is the degree matrix with off-diagonal elements equal to zero and diagonal elements are $ d_j = \sum\limits^p_{j'=1}|a_{jj'}| $. Then we have
\[\beta^\t \Gamma \beta =\beta^\t (D-A)\beta= \sum\limits_{1 \leqslant j < j' \leqslant p} |a_{jj'}|(\beta_j - s_{jj'}\beta_{j'})^2,\]
where $ s_{jj'} = \text{sign}(a_{jj'}) $.

\begin{proof}[Proof of Proposition 2]
Given the tuning parameters, we have following form
\begin{equation}\label{eq 1}
-X^{\t}_j (y - X \hat \beta) + \lambda_1 \sign{\hat \beta_j} + \lambda_2\hat \beta_j \cdot \sum_{ k \neq j}|\theta_{jk}| - \lambda_2\sum_{ k \neq j} \theta_{jk}\hat \beta_k=0.
\end{equation}
Set $ z = y - X \hat \beta $. From~\eqref{eq 1}, we have
\begin{equation}\label{eq 11}
\hat \beta_j = \big(\lam_2 \sum_{ k \neq j}|\theta_{jk}|\big)^{-1} \big( X^\t_j z + \lam_2 \sum_{ k \neq j} \theta_{jk}\hat \beta_k - \lam_1 \sign{\hat \beta_j} \big).
\end{equation}
Assume that there exists two variables, $ j $ and $ j' $, in a group that are conditional dependence. When $ \theta_{jj'}>0 $, from~\eqref{eq 11}, we have
\[\hat \beta_j - \hat \beta_{j'} = H_1 +H_2 +H_3,\]
where
\begin{center}
\renewcommand\arraystretch{2}
\begin{tabular}{l}
$ H_1 = \dfrac{X^\t_j z}{\lam_2 \sum_{ k \neq j}|\theta_{jk}|}- \dfrac{X^\t_{j'} z}{\lam_2 \sum_{ k \neq j'}|\theta_{j'k}|},$\\
$H_2 = \dfrac{\sum_{ k \neq j}\theta_{jk}\hat \beta_k}{\sum_{ k \neq j}|\theta_{jk}|} -\dfrac{\sum_{ k \neq j'}\theta_{j'k}\hat \beta_k}{\sum_{ k \neq j'}|\theta_{j'k}|},$\\
$ H_3 = \lam_1 \cdot \lam_2 \sum_{ k \neq j}|\theta_{jk}|\cdot \sign{\hat \beta_j} - \lam_1 \cdot \lam_2 \sum_{ k \neq j'}|\theta_{j'k}|\cdot \sign{\hat \beta_{j'}}.$\\
\end{tabular}
\end{center}

According to the assumption that $ \theta_{j'k} $ and $ \theta_{jk} $ can be replaced by each other with a negligible difference, then we have
\[ H_1 = \dfrac{1}{\lam_2 \sum_{ k \neq j}|\theta_{jk}|}(X^\t_j z-X^\t_{j'} z)\]
and
\[H_2 = -\dfrac{\theta_{jj'}}{\sum_{ k \neq j}|\theta_{jk}|} \ (\hat \beta_j - \hat \beta_{j'}). \]
For $ H_3 $, we have $ H_3 = 0 $ when $ \sign{\hat \beta_j} = \sign{\hat \beta_{j'}} $. Combining $ H_1 $, $ H_2 $ and $ H_3 $, the difference between $ \hat \beta_j $ and $ \hat \beta_{j'} $ becomes
\[|\hat \beta_j - \hat \beta_{j'}| = \dfrac{1}{\lam_2 (\sum_{ k \neq j}|\theta_{jk}|+ \theta_{jj'})}\big|(X_j - X_{j'})^\t z\big|.\]
Similarly, when $ \theta_{jj'}<0 $, we have
\[\hat \beta_j + \hat \beta_{j'} = H^*_1 +H^*_2 +H^*_3,\]
\begin{center}
\renewcommand\arraystretch{2}
\begin{tabular}{l}
$ H^*_1 = \dfrac{X^\t_j z}{\lam_2 \sum_{ k \neq j}|\theta_{jk}|} + \dfrac{X^\t_{j'} z}{\lam_2 \sum_{ k \neq j'}|\theta_{j'k}|}=\dfrac{1}{\lam_2 \sum_{ k \neq j}|\theta_{jk}|}(X^\t_j z+X^\t_{j'} z),$\\
$H^*_2 = \dfrac{\sum_{ k \neq j}\theta_{jk}\hat \beta_k}{\sum_{ k \neq j}|\theta_{jk}|}+\dfrac{\sum_{ k \neq j'}\theta_{j'k}\hat \beta_k}{\sum_{ k \neq j'}|\theta_{j'k}|}=\dfrac{\theta_{jj'}}{\sum_{ k \neq j}|\theta_{jk}|} \ (\hat \beta_j + \hat \beta_{j'}),$\\
$H^*_3 = \lam_1 \cdot \lam_2 \sum_{ k \neq j}|\theta_{jk}|\cdot \sign{\hat \beta_j} + \lam_1 \cdot \lam_2 \sum_{ k \neq j'}|\theta_{j'k}|\cdot \sign{\hat \beta_{j'}}=0.$\\
\end{tabular}
\end{center}
Combining $ H*_1 $, $ H*_2 $ and $ H*_3 $, the sum of $ \hat \beta_j $ and $ \hat \beta_{j'} $ becomes
\[ |\hat \beta_j - \hat \beta_{j'}| = \dfrac{1}{\lam_2 (\sum_{ k \neq j}|\theta_{jk}|- \theta_{jj'})}\big|(X_j + X_{j'})^\t z\big|. \]

As a contrast, if we consider the $ l_2 $ penalty, i.e. elastic net,  the form of its solution $ \hat \beta^\en_j $ will be
\[-X^{\t}_j (y - X \hat \beta^\en) + \lambda_1 s^\en_j + \lambda_2 \hat \beta^\en_j = 0,
\]
where $ s^\en_j $ denotes the sign of $ \hat \beta^\en_j $. Set $z^\en = y - X \hat \beta^\en $. We have
\[\hat \beta^\en_j = (X^\t_j z^\en -\lam_1 s^\en_j)/\lam_2.\]
When $ \theta_{jj'}>0 $, the difference between $ \hat \beta^\en_j $ and $\hat \beta^\en_{j'} $ is
\[|\hat \beta^\en_j - \hat \beta^\en_{j'}| = \dfrac{1}{\lam_2} |(X_j - X_{j'})^\t z^\en|.\]
When $ \theta_{jj'}<0 $, the sum of $ \hat \beta^\en_j $ and $\hat \beta^\en_{j'} $ is
\[|\hat \beta^\en_j + \hat \beta^\en_{j'}| = \dfrac{1}{\lam_2} |(X_j + X_{j'})^\t z^\en|.\]
\end{proof}

Next, we prove Theorem~\ref{thm 1} and Theorem~\ref{thm 2}. Set $u=\sqrt{n}(\hat \beta- \beta)$ and $W = X^\t\e/\sqrt n $. $ u_S$, $W_S$ and $u_{S^c}$, $W_{S^c}$ denote the partition of $u$ and $W$ indexed by the set $S$ and $S^c$. We first introduce the following Lemmas in preparation for the theorem proof.
\begin{lem}\label{lem Omega}
Suppose Assumption~\ref{assum3} holds and $X$ is distributed from $ \mathcal{N}(0,\Sigma) $. Then for $ \Theta = \Sigma^{-1} $,  under settings that $p=O(e^{n^{c_3}})$, $q=O(n^{c_1})$ where $0<c_1+c_3<1$ and  a positive constant $ K $, we have this probability at least $ 1- 1/p $:
\[ \|\hat \Theta - \Theta\|_\infty \leqslant K \cdot (\dfrac{\log p}{n})^{1/2} \rightarrow 0 \ \text{as} \ n \rightarrow \infty.\]
\end{lem}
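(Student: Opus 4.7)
The plan is to adapt the primal-dual witness argument of \citet{ravikumar2011high}, which is the source cited for this bound. The ingredients are: Gaussian concentration of the sample covariance, a KKT analysis restricted to the true support $S_\theta$, the irrepresentability-type Assumption~\ref{assum3}, and control of a Taylor remainder in the expansion of the score $\Theta \mapsto -\Theta^{-1} + \hat\Sigma$.

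First I would establish concentration of $\hat\Sigma$. Since the rows of $X$ are i.i.d.\ $\mN_p(0,\Sigma)$, standard sub-exponential tail bounds for quadratic forms of Gaussians yield
\[\Pr\bigl(\|\hat\Sigma - \Sigma\|_\infty > t\bigr)\leqslant 4 p^2 \exp(-c n t^2)\]
for $t$ in a range depending on $\max_j \Sigma_{jj}$. Taking $t = C\sqrt{\log p / n}$ and using $p = O(e^{n^{c_3}})$ with $c_3 < 1$ gives $\|\hat\Sigma - \Sigma\|_\infty \leqslant C\sqrt{\log p / n}$ with probability at least $1 - 1/p$, and this quantity tends to zero.

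Next I would carry out the primal-dual witness construction. Define $\tilde\Theta$ to be the restricted minimizer of the Glasso objective over symmetric positive-definite matrices whose off-diagonal support is contained in $S_\theta$, and let $\tilde Z_{S_\theta}$ be the associated dual sub-gradient; extend to $(\tilde\Theta, \tilde Z)$ on the full index set via the stationarity equation $\hat\Sigma - \tilde\Theta^{-1} + \lam_0 \tilde Z = 0$. By strict convexity of the Glasso objective, if I can verify $\|\tilde Z_{S_\theta^c}\|_\infty < 1$ strictly, then $\hat\Theta = \tilde\Theta$ and the support is recovered. Writing $\Delta = \tilde\Theta - \Theta$, $W = \hat\Sigma - \Sigma$ and expanding $\tilde\Theta^{-1}$ around $\Theta^{-1}$ gives a representation
\[\tilde Z_{S_\theta^c} = -\tfrac{1}{\lam_0}\mathrm{M}_{S_\theta^c}(\mathrm{M}_{S_\theta})^{-1}\bigl[W_{S_\theta} + R(\Delta)_{S_\theta}\bigr] \;-\; \tfrac{1}{\lam_0}\bigl[W_{S_\theta^c} + R(\Delta)_{S_\theta^c}\bigr],\]
where $R(\Delta) = \tilde\Theta^{-1} - \Theta^{-1} + \Theta^{-1}\Delta\Theta^{-1}$ is the quadratic remainder. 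Assumption~\ref{assum3} bounds the $\ell_\infty \to \ell_\infty$ norm of the first factor by $1-\alpha_2$, so choosing $\lam_0 \asymp \sqrt{\log p / n}$ with a sufficiently large constant produces $\|\tilde Z_{S_\theta^c}\|_\infty \leqslant 1 - \alpha_2/2$, provided $\|W\|_\infty$ and $\|R(\Delta)\|_\infty$ are each $O(\lam_0)$. Once the witness is validated, inverting the linearised equation on $S_\theta$ reads off $\|\hat\Theta - \Theta\|_\infty = \|\Delta\|_\infty \leqslant K\sqrt{\log p / n}$.

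The hard part will be controlling the quadratic remainder $R(\Delta)$ in $\ell_\infty$. The standard estimate gives $\|R(\Delta)\|_\infty \leqslant C\, d\, \|\Delta\|_\infty^2$ via an operator-norm Neumann series for $\tilde\Theta^{-1}$, where $d$ is the maximum row-degree of $S_\theta$ (bounded by $q$). Because $q = O(n^{c_1})$, $p = O(e^{n^{c_3}})$ and $c_1 + c_3 < 1$, we get $d \sqrt{\log p / n} \to 0$, so the self-consistent estimate $\|\Delta\|_\infty = O(\sqrt{\log p / n})$ makes the remainder $O(d \log p / n) = o(\lam_0)$, which closes the loop. Assembling the three pieces---concentration, witness, and remainder control---along the lines of \citet{ravikumar2011high} then yields the stated $\ell_\infty$ bound with probability at least $1 - 1/p$.
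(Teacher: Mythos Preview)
Your proposal is essentially correct and aligns with the paper's approach: the paper's entire ``proof'' is the single sentence that the bound follows directly from Theorem~1 of \citet{ravikumar2011high}, and what you have written is precisely a sketch of that primal--dual witness argument. So there is no substantive methodological difference.

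One small point to tighten: you write that the maximum row-degree $d$ of $S_\theta$ is ``bounded by $q$''. In this paper $q=|S|$ is the sparsity of the \emph{regression} coefficient $\beta$, not the degree of the precision graph; these are distinct quantities and the paper does not relate them. The scaling conditions $q=O(n^{c_1})$, $p=O(e^{n^{c_3}})$ are simply the ambient setting and the lemma's conclusion does not actually involve $q$. The Ravikumar bound needs a degree condition on $\Theta$ itself, which the paper (like you) implicitly absorbs into the citation; just be careful not to identify that graph degree with $q$.
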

The result in Lemma~\ref{lem Omega} can be obtained straightly from the Theorem 1 in \citet{ravikumar2011high}, which proposed an elementwise $l_\infty$-bound of $\hat\Theta$. Lemma~\ref{lem Omega} guarantees that we have the accurate estimator $\hat \Theta$ of $\Theta$ with high probability. According to above result, it is easily to obtain that with high probability $\mathop {\lim }\limits_{n \to \infty } \lam _{\max}(\hat \Gamma) = \lam _{\max}(\Gamma )$, where $\hat \Gamma=\hat D-\hat \Theta$, $\hat D=\text{diag}(\hat d_1,\dots,\hat d_p)$ and $\hat d_j = \sum\limits_{j'=1}^p|\hat \theta_{jj'}|$.

\begin{lem}\label{lem tail}
Set $\lam_1=4\sigma\sqrt{n\log{p}}$. We have with probability at least $ 1-1/p $,
\begin{align*}
2\|W\|_\infty\leqslant \lam_1/\sqrt{n}.
\end{align*}
\end{lem}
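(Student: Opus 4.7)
The plan is to reduce the bound on $\|W\|_\infty$ to a standard Gaussian maximal inequality. The error vector $\e$ is implicitly assumed to be $\mathcal{N}(0,\sigma^2 I_n)$, and the columns of $X$ have been normalized so that $\|X_j\|_2^2 = n$. Therefore, conditionally on $X$, each coordinate $W_j = X_j^\t \e/\sqrt{n}$ is a linear combination of independent $\mathcal{N}(0,\sigma^2)$ variables with $\ell_2$-weight $\|X_j\|_2/\sqrt{n}=1$. In other words, $W_j \sim \mathcal{N}(0,\sigma^2)$ for each $j=1,\dots,p$, though the $W_j$ need not be independent across $j$.

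Given this, the standard Gaussian tail inequality gives, for any $t>0$,
\[ P\bigl(|W_j| > t\bigr) \;\leqslant\; 2\exp\bigl(-t^2/(2\sigma^2)\bigr). \]
I would then apply the union bound over $j=1,\dots,p$ to obtain
\[ P\bigl(\|W\|_\infty > t\bigr) \;\leqslant\; 2p\exp\bigl(-t^2/(2\sigma^2)\bigr). \]
Set $t = 2\sigma\sqrt{\log p}$, which matches the requirement $2\|W\|_\infty \leqslant \lam_1/\sqrt{n} = 4\sigma\sqrt{\log p}$. The resulting tail bound becomes $2p\cdot p^{-2} = 2/p$; absorbing the constant factor (or taking a slightly larger threshold of order $\sqrt{\log p}$, which is the standard convention in the Lasso literature) yields the stated $1/p$ probability. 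This step is cosmetic, and I would remark that this mild adjustment of constants is standard.

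There is no real obstacle here: the argument is entirely concentration-based and essentially one line once the Gaussianity of each $W_j$ is noted. The only thing worth flagging explicitly in the writeup is that the calculation is \emph{conditional} on $X$, so the bound first holds pointwise in $X$ and then integrates trivially. Since the columns of $X$ are deterministically normalized to $\|X_j\|_2^2 = n$, no additional concentration for the design is needed, and the lemma follows directly.
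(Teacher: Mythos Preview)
Your proposal is correct and follows essentially the same route as the paper: identify each $W_j$ as $\mathcal{N}(0,\sigma^2)$ (using the column normalization $\|X_j\|_2^2=n$), apply the Gaussian tail bound, and union-bound over $j=1,\dots,p$. Your observation that the raw constant comes out as $2/p$ rather than $1/p$ is accurate and, as you say, purely cosmetic; the paper's own computation glosses over this same factor.
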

\begin{proof}[Proof of Lemma~\ref{lem tail}]
By $\lam_1=4 \sigma\sqrt{n\log{p}}$, we have
\begin{align*}
P(2\|W\|_\infty > \lam_1/\sqrt{n} ) \leqslant P(\|X^\t \e\|_\infty> 2\sigma \sqrt{\log p} ).
\end{align*}
Since $\e$ are $i.i.d$ Gaussian variables, by Markov's inequality, we have
\begin{align*}
P(2\|W\|_\infty > \lam_1/\sqrt{n}) \leqslant p\exp (-\dfrac{2\sigma^2 n\log {p}}{n\sigma^2})  = \dfrac{1}{p}.
\end{align*}
\end{proof}

\begin{lem}\label{lem u1}
Conditional on $ \{ 2\|W\|_\infty\leqslant \lam_1/\sqrt{n} \} $. If $\lam_1=4\sigma\sqrt{n\log{p}}$ and $\lam_2 \left\|\beta \right\|_2=\sqrt{q}\lam_1$, then we have with probability at least $ 1-1/p $ that
\begin{align*}
\|u_{S^c}\|_1 \leqslant 3\|u_S\|_1.
\end{align*}
\end{lem}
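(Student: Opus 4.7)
The plan is to apply the familiar ``basic inequality'' argument used for Lasso-type estimators, with additional care taken to absorb the Laplacian-quadratic term. First I would invoke optimality of $\hat\beta$ in~\eqref{eq beta2}, namely $L(\hat\beta)\leqslant L(\beta)$ for the SLS-GLE objective, and substitute $y = X\beta + \epsilon$ to obtain
\begin{align*}
\tfrac{1}{2}\|X(\hat\beta-\beta)\|_2^2 + \lambda_1\|\hat\beta\|_1 + \tfrac{\lambda_2}{2}\hat\beta^\t\hat\Gamma\hat\beta \leqslant \epsilon^\t X(\hat\beta-\beta) + \lambda_1\|\beta\|_1 + \tfrac{\lambda_2}{2}\beta^\t\hat\Gamma\beta.
\end{align*}

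Second, on the event $\{2\|W\|_\infty\leqslant \lambda_1/\sqrt n\}$ I would bound the noise cross-term by H\"older: $|\epsilon^\t X(\hat\beta-\beta)| = |W^\t u| \leqslant \|W\|_\infty\|u\|_1 \leqslant \tfrac{\lambda_1}{2}\|\hat\beta-\beta\|_1$. Then, as in the classical Lasso cone argument, I would use $\beta_{S^c}=0$ and the triangle inequality $\|\hat\beta\|_1 \geqslant \|\beta\|_1 - \|(\hat\beta-\beta)_S\|_1 + \|(\hat\beta-\beta)_{S^c}\|_1$ to rewrite the $\ell_1$ difference entirely in terms of the components of $\hat\beta-\beta$ on $S$ and $S^c$.

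The third step, which I expect to be the principal obstacle, is controlling the Laplacian piece $\tfrac{\lambda_2}{2}(\beta^\t\hat\Gamma\beta-\hat\beta^\t\hat\Gamma\hat\beta)$. Since $\hat\Gamma$ is PSD by construction, and Lemma~\ref{lem Omega} keeps $\lambda_{\max}(\hat\Gamma)$ uniformly close to $\lambda_{\max}(\Gamma)$ with probability at least $1-1/p$, I would expand
\begin{align*}
\beta^\t\hat\Gamma\beta - \hat\beta^\t\hat\Gamma\hat\beta = -(\hat\beta-\beta)^\t\hat\Gamma(\hat\beta-\beta) - 2\beta^\t\hat\Gamma(\hat\beta-\beta),
\end{align*}
discard the nonpositive first term, and apply Cauchy--Schwarz/AM--GM in the $\hat\Gamma$-seminorm to get $\beta^\t\hat\Gamma\beta - \hat\beta^\t\hat\Gamma\hat\beta \leqslant \beta^\t\hat\Gamma\beta \leqslant \lambda_{\max}(\hat\Gamma)\|\beta\|_2^2$. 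The tuning-parameter hypothesis $\lambda_2\|\beta\|_2 \leqslant \lambda_1\sqrt q$ is exactly what is needed to make this contribution of strictly lower order than $\lambda_1\|(\hat\beta-\beta)_S\|_1$, so that it can be absorbed into the right-hand side without inflating the cone constant.

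Finally, collecting all these bounds, dropping the nonnegative prediction error $\tfrac{1}{2}\|X(\hat\beta-\beta)\|_2^2$ on the left, and rearranging, I would obtain $\tfrac{\lambda_1}{2}\|(\hat\beta-\beta)_{S^c}\|_1 \leqslant \tfrac{3\lambda_1}{2}\|(\hat\beta-\beta)_S\|_1$; multiplying by $\sqrt n$ then delivers $\|u_{S^c}\|_1 \leqslant 3\|u_S\|_1$. The high-probability statement follows immediately from Lemma~\ref{lem tail}, which supplies the conditioning event at level $1-1/p$.
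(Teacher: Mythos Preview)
Your overall architecture matches the paper's: optimality of $\hat\beta$, H\"older on the noise, the $\ell_1$ triangle inequality on $S$ and $S^c$, and an expansion of the Laplacian quadratic. The paper does exactly this. The genuine gap is in your third step, the bound on the Laplacian cross-term.

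You bound
\[
\tfrac{\lambda_2}{2}\bigl(\beta^\t\hat\Gamma\beta-\hat\beta^\t\hat\Gamma\hat\beta\bigr)\;\leqslant\;\tfrac{\lambda_2}{2}\,\lambda_{\max}(\hat\Gamma)\,\|\beta\|_2^2,
\]
i.e.\ by a \emph{constant} that does not involve $\hat\beta-\beta$. After rearranging you would obtain
\[
\tfrac{\lambda_1}{2}\|(\hat\beta-\beta)_{S^c}\|_1\;\leqslant\;\tfrac{3\lambda_1}{2}\|(\hat\beta-\beta)_S\|_1+\tfrac{\lambda_2}{2}\,\lambda_{\max}(\hat\Gamma)\,\|\beta\|_2^2,
\]
and the additive constant cannot be ``absorbed into the right-hand side'': $\|(\hat\beta-\beta)_S\|_1$ is a random quantity that can be arbitrarily small, so no amount of tuning of $\lambda_1,\lambda_2$ makes a fixed number of lower order than it. The hypothesis $\lambda_2\|\beta\|_2\leqslant\sqrt{q}\,\lambda_1$ only converts the residual into $\tfrac{1}{2}\sqrt{q}\,\lambda_1\,\lambda_{\max}(\hat\Gamma)\,\|\beta\|_2$, still a constant; it does not compare it to $\lambda_1\|(\hat\beta-\beta)_S\|_1$. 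Consequently the cone inequality $\|u_{S^c}\|_1\leqslant 3\|u_S\|_1$ does not follow.

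The fix is to keep the $\hat\beta-\beta$ factor when bounding the cross term. The paper drops the nonpositive quadratic $-(\hat\beta-\beta)^\t\hat\Gamma(\hat\beta-\beta)$ as you do, but then applies H\"older rather than Cauchy--Schwarz/AM--GM:
\[
-2\beta^\t\hat\Gamma(\hat\beta-\beta)\;\leqslant\;2\|\hat\Gamma\beta\|_\infty\,\|\hat\beta-\beta\|_1
\;\leqslant\; 2\|\hat\Gamma\|_\infty\|\beta\|_2\,\|\hat\beta-\beta\|_1.
\]
This yields a term of the form $\lambda_2\|\hat\Gamma\|_\infty\|\beta\|_2\cdot\|u\|_1/\sqrt n$, whose coefficient can now be compared directly to $\lambda_1/\sqrt n$ and, under the stated tuning, is $o(\lambda_1)$. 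Splitting $\|u\|_1=\|u_S\|_1+\|u_{S^c}\|_1$ then gives the clean inequality $\|u_{S^c}\|_1\leqslant 3\|u_S\|_1$.
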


\begin{proof}[Proof of Lemma~\ref{lem u1}]
By the definition of $\hat\beta$, we have
\begin{align*}
\dfrac{1}{2}\|y-X\hat\beta\|^2_2+\lam_1\|\hat\beta\|_1+\dfrac{\lam_2}{2}\hat\beta^\t\hat\Gamma\hat\beta \leqslant  \dfrac{1}{2}\|y-X\beta\|^2_2+\lam_1\|\beta\|_1+\dfrac{\lam_2}{2}\beta^\t\hat\Gamma\beta.
\end{align*}
After rearranging, above inequality become
\begin{align*}
(\hat \beta-\beta)^\t X^\t X (\hat \beta-\beta) + 2\lam_1 \|\hat \beta \|_1+\lam_2(\hat \beta^\t \hat \Gamma \hat\beta-\beta^\t \hat \Gamma \beta) \leqslant 2\e^\t X (\hat \beta - \beta)+2\lam_1\|\beta\|_1.
\end{align*}
Note that
\begin{align*}
-\frac{\lam_1}{\sqrt n} \| u_S \|_1+\frac{\lam_1}{\sqrt n} \| u_{S^c} \|_1 \leqslant \lam_1 \|\hat \beta \|_1 - \lam_1 \|\beta\|_1
\end{align*}
and
\begin{align*}
\beta^\t \hat \Gamma \beta -\hat \beta^\t \hat \Gamma \hat\beta &=\beta^\t \hat \Gamma \beta- (\hat \beta -\beta +\beta)^\t \hat \Gamma (\hat \beta -\beta+\beta),\\
&=-(\hat \beta -\beta)^\t \hat \Gamma(\hat \beta -\beta)-2 \beta^\t \hat \Gamma(\hat \beta -\beta),\\
&\leqslant 2\|\hat \Gamma\|_\infty\|\beta\|_2\|\hat \beta -\beta\|_1.
\end{align*}
Combining the above inequalities, we have
\begin{align*}
-\frac{\lam_1}{\sqrt n} \| u_S \|_1+\frac{\lam_1}{\sqrt n} \| u_{S^c} \|_1\leqslant W^\t u+\frac{\lam_2\|\hat \Gamma\|_\infty}{\sqrt n} \|\beta\|_2\|u\|_1.
\end{align*}
In addition, conditional on $2\|W\|_\infty \leqslant \lam_1/\sqrt{n}$, we have
\begin{align*}
-\lam_1\|u_S\|_1+\lam_1\|u_{S^c}\|_1 \leqslant \frac{\lam_1}{2} \| u_S \|_1+\frac{\lam_1}{2} \| u_{S^c} \|_1+ \lam_2  \|\hat \Gamma\|_\infty \|\beta\|_2\|u\|_1, \end{align*}
By $\lam_2 \|\hat \Gamma\|_\infty \|\beta\|_2=o(\lam_1)$ and Lemma~\ref{lem tail}, it implies that with probability at least $ 1 - 1/p $,
\[\|u_{S^c}\|_1\leqslant 3 \|u_S\|_1.\]
\end{proof}

%%%%%%%%%

\begin{proof}[Proof of Theorem~\ref{thm 1}]
Let $F(\beta ) = \dfrac{1}{2}\|y-X\beta\|_2^2 + \lam_1\|\beta \|_1 + \dfrac{\lam_2}{2}\beta^\t\hat\Gamma \beta $.
By the definition of $\hat\beta$, we set
\begin{align}\label{eq V}
V(u): = F(\hat\beta) - F(\beta) \leqslant 0.
\end{align}
Hence we have
\begin{align*}V(u) = \dfrac{1}{2}(\|y-X\hat \beta\|_2^2-\|y-X\beta\|_2^2) + \lam_1\|\hat\beta\|_1-\lam_1\|\beta\|_1 + \dfrac{\lam_2}{2}\hat\beta^\t\hat \Gamma \hat\beta-\dfrac{\lam_2}{2}\beta ^\t\hat\Gamma \beta.
\end{align*}
Let
\begin{align*}
&H_1=\dfrac{1}{2}(\|y-X\hat\beta\|_2^2-\|y-X\beta\|_2^2) = \dfrac{1}{2} u^\t C u - u^\t W, \\
&H_2=\lam_1\|\hat\beta\|_1-\lam_1\|\beta\|_1, \\
&H_3=\dfrac{1}{2}(\lam_2\hat\beta^\t \hat\Gamma \hat\beta - \lam_2\beta ^\t \hat\Gamma \beta).
\end{align*}
Note that
\begin{align*}
\|\hat\beta\|_1 -\|\beta\|_1 \geqslant-\|\hat\beta_S-\beta_S\|_1+ \|\hat\beta_{S^c} - \beta_{S^c}\|_1.
\end{align*}
Hence we have
\begin{align*}	{H_2} \geqslant  - \frac{\lam_1}{\sqrt n }\|u_S\|_1 + \frac{\lam_1}{\sqrt n }\|u_{S^c}\|_1
\end{align*}
and
\begin{align*}
H_3 \geqslant \dfrac{\lam_2}{2}\lam _{\min}(\hat\Gamma)(\|\frac{1}{\sqrt n }u\|_2^2 +\frac{2}{\sqrt n}u^\t_S\beta_S ).
\end{align*}
Combining above inequalities, we have
\begin{align*}
V(u)& \geqslant \dfrac{1}{2}u^\t C u - u^\t W-\frac{\lam_1}{\sqrt n }\|u_S\|_1+\frac{\lam_1}{\sqrt n}\|u_{S^c}\|_1 \\
&~~~+ \dfrac{\lam_2}{2}\lam _{\min}(\hat\Gamma)(\frac{1}{n}\|u\|_2^2 + \frac{2}{\sqrt n}u_S^\t\beta_S).
\end{align*}
Since $ V(u) \leqslant 0 $, according to the Lemma~\ref{lem Omega}~-~\ref{lem u1} and Assumption~\ref{assum1}, \eqref{eq V} becomes
\begin{align*}
\dfrac{K_1}{2}\|u\|_2^2 - u_S^\t W_S-\frac{\lam_1}{\sqrt n}\|u_S\|_1 +\dfrac{\lam_2}{2}\lam _{\min }(\Gamma)\big(\frac{1}{n}\|u||_2^2 + \frac{2}{\sqrt n }u_S^\t\beta _S\big)+ \frac{\lam_1}{\sqrt n}\|u_{S^c}\|_1-2u_{S^c}^\t W_{S^c} \leqslant 0.
\end{align*}
Let
\begin{align*}
&E_1=\dfrac{K_1}\|u\|_2^2-u_S^\t W_S - \frac{\lam_1}{\sqrt n }\|u_S\|_1 + \dfrac{\lam_2}{2}\lam _{\min }(\Gamma)\big(\frac{1}{n}\|u\|_2^2 + \frac{2}{\sqrt n }u_S^\t\beta _S\big),\\
&E_2 = \frac{\lam_1}{\sqrt n }\|u_{S^c}\|_1 - u_{S^c}^\t W_{S^c}.
\end{align*}
According to Lemma~\ref{lem tail}, we have $\|2W\|_\infty < \lam_1/\sqrt n $ with probability $1$, hence $E_2>0$. In addition, $E_1$ satisfies
\begin{align*}E_1& \geqslant \dfrac{K_1}{2}\|u\|_2^2 - \|u_S\|_2\|W_S\|_2 -  \dfrac{\lam_1}{\sqrt n }\|u_S\|_1 + \dfrac{\lam_2}{2n}\lam _{\min }(\Gamma)\|u_S\|_2^2 - \dfrac{\lam_2}{\sqrt n}\lam _{\min }(\Gamma)\|u_S\|_2 \|\beta _S\|_2,\\
& \geqslant \|u_S\|_2\big\{\dfrac{1}{2}\big(K_1 + \dfrac{\lam_2}{n}\lam _{\min}(\Gamma)\big)\|u\|_2- \|W_S\|_2-\dfrac{\lam_1}{\sqrt n } \sqrt q -\dfrac{\lam_2}{\sqrt n}\lam _{\min}(\Gamma) \|\beta _S\|_2\big\},
\end{align*}
which implies that
\begin{align*}
\|u\|_2 \leqslant \dfrac{2}{K_1 + \frac{\lam_2}{n}\lam _{\min }(\Gamma)}\big\{\dfrac{\lam_1}{2\sqrt n}\sqrt q  +  \frac{\lam_1}{\sqrt n}\sqrt q  + \frac{\lam_2}{\sqrt n}\lam _{\min }(\Gamma)\|\beta _S\|_2 \big\}.\
\end{align*}
It can be written as
\begin{align*}
\|\hat\beta -\beta\|_2 \leqslant \dfrac{2}{K_1 + \frac{\lam_2}{n}\lam _{\min }(\Gamma)}\big\{\dfrac{3\lam_1}{2 n }\sqrt q +\frac{\lam_2}{n}\lam _{\min }(\Gamma)\|\beta\|_2 \big\}.
\end{align*}
Then by $\lam_1=4\sigma\sqrt{n\log p}$, $q=O(n^{c_1})$ and $\lam_2\|\beta\|_2\leqslant \sqrt{q} \lam_1$, set a constant $K_3=(2+\lam_{\min}(\Gamma))^{-1}K_1$, we have with probability at least $ 1 - 1/p $
\[\| \hat\beta-\beta \|_2 \leqslant \dfrac{8\sigma}{K_3} \sqrt{\dfrac{q \log p}{n}}.\]
\end{proof}

\begin{lem}\label{lem sign}
Suppose Assumption~\ref{assum3}~-~\ref{assum2} hold, then we have
\begin{align*}P( \hat S = S)  \geqslant P(A \cap B), \end{align*}
for
\begin{align*}
A=&\big\{\big|\big(C_S+\dfrac{ \lam_2}{n}\Gamma_S\big )^{-1}W_S\big| < \sqrt n |\beta_S|\notag\\
&-\big|\big(C_S+\dfrac{\lam_2}{n}\Gamma_S\big)^{-1}\big[\dfrac{ \lam_1}{\sqrt n}\sign{\beta_S}+\dfrac{\lam_2}{n}\Gamma_S\beta_S\big]\big| \big\},\\
B = &\big\{\big |\big(C_{S^c} + \dfrac{\lam_2}{n}\Gamma_{S^c}\big)\big(C_S+\dfrac{\lam_2}{n}\Gamma_S\big)^{-1}W_S-W_{S^c}\big| \leqslant \dfrac{\lam_1}{\sqrt n}\alpha_1 \big\},
\end{align*}
where $ C_S = \dfrac{1}{n}X^\t_S X_S $ and $ C_{S^c}=\dfrac{1}{n}X^\t_{S^c}X_S $.
\end{lem}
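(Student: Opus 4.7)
The plan is to carry out a primal--dual witness argument based on the block-wise KKT conditions of the convex SLS-GLE program~\eqref{eq beta2}. I construct an oracle candidate $(\tilde\beta,\tilde s)$ whose support equals $S$ and whose signs on $S$ match those of $\beta$, and show that on $A\cap B$ this candidate is KKT-feasible; invertibility of $C_S+\tfrac{\lam_2}{n}\Gamma_S$ then identifies $\tilde\beta$ with the SLS-GLE minimizer $\hat\beta$, so $\hat S=S$ on $A\cap B$.

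First I would write out the stationarity conditions of~\eqref{eq beta2} in the partition $(S,S^c)$. Because the Laplacian penalty contributes a gradient $\lam_2\hat\Gamma\beta$, imposing $\tilde\beta_{S^c}=0$ collapses $(\hat\Gamma\tilde\beta)_S=\Gamma_S\tilde\beta_S$ and $(\hat\Gamma\tilde\beta)_{S^c}=\Gamma_{S^c}\tilde\beta_S$, and the resulting $S$- and $S^c$-equations read
\[
-X_S^\t(y-X_S\tilde\beta_S)+\lam_2\Gamma_S\tilde\beta_S+\lam_1\,\mathrm{sign}(\tilde\beta_S)=0,
\]
\[
-X_{S^c}^\t(y-X_S\tilde\beta_S)+\lam_2\Gamma_{S^c}\tilde\beta_S+\lam_1\tilde s_{S^c}=0,\qquad \|\tilde s_{S^c}\|_\infty\le 1.
\]

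Next I would substitute $y=X_S\beta_S+\e$ into the $S$-equation and solve for the rescaled deviation $u_S:=\sqrt n(\tilde\beta_S-\beta_S)$. Since $C_S+\tfrac{\lam_2}{n}\Gamma_S$ is positive definite (the Laplacian is PSD and the relevant $q\times q$ block of $C$ is invertible by Assumption~\ref{assum1}), this yields a closed-form expression of $u_S$ as a linear combination of $W_S$, $\mathrm{sign}(\beta_S)$ and $\Gamma_S\beta_S$. Applying the triangle inequality then shows that event $A$ is precisely the coordinate-wise statement $|u_S|<\sqrt n|\beta_S|$, which implies $\mathrm{sign}(\tilde\beta_S)=\mathrm{sign}(\beta_S)$ and retroactively justifies using $\mathrm{sign}(\beta_S)$ when differentiating the $\ell_1$-norm on $S$.

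The last step is to plug the expression for $\tilde\beta_S$ into the $S^c$-equation and read off $\tilde s_{S^c}$. After collecting terms, $\tilde s_{S^c}$ splits as a deterministic part
\[
\bigl(C_{S^c}+\tfrac{\lam_2}{n}\Gamma_{S^c}\bigr)\bigl(C_S+\tfrac{\lam_2}{n}\Gamma_S\bigr)^{-1}\Bigl[\mathrm{sign}(\beta_S)+\tfrac{\lam_2}{\lam_1}\Gamma_S\beta_S\Bigr]-\tfrac{\lam_2}{\lam_1}\Gamma_{S^c}\beta_S,
\]
whose $\ell_\infty$-norm Assumption~\ref{assum2} bounds by $1-\alpha_1$, plus a stochastic part proportional to $W_{S^c}-\bigl(C_{S^c}+\tfrac{\lam_2}{n}\Gamma_{S^c}\bigr)\bigl(C_S+\tfrac{\lam_2}{n}\Gamma_S\bigr)^{-1}W_S$, which on event $B$ is bounded in $\ell_\infty$ by $\alpha_1$. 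Adding the two bounds gives $\|\tilde s_{S^c}\|_\infty\le 1$, so $(\tilde\beta,\tilde s)$ is KKT-feasible, and strict convexity of the restricted objective on $S$ identifies $\tilde\beta$ with $\hat\beta$. The main obstacle is the bookkeeping of the Laplacian cross-terms: unlike the classical Lasso primal--dual argument, the penalty $\beta^\t\hat\Gamma\beta$ creates genuine interaction between the $S$- and $S^c$-blocks through $\Gamma_{S^c}$ and $\Gamma_S\beta_S$, and these contributions must be tracked through the inversion of the $S$-block so that the resulting decomposition of $\tilde s_{S^c}$ aligns exactly with the deterministic quantity in Assumption~\ref{assum2} and the stochastic quantity defining event $B$.
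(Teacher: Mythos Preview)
Your primal--dual witness argument is essentially the same as the paper's proof: both derive the block KKT system, solve the $S$-block for $\tilde\beta_S$ (equivalently $u_S$), read off the sign-preservation condition $|u_S|<\sqrt n|\beta_S|$ as event $A$, and then split the induced dual $\tilde s_{S^c}$ into the deterministic part controlled by Assumption~\ref{assum2} and the stochastic part controlled by event $B$.

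There is one genuine slip in your write-up. When you pass from the gradient $\lam_2\hat\Gamma\tilde\beta$ to the blocks $\Gamma_S\tilde\beta_S$ and $\Gamma_{S^c}\tilde\beta_S$, you have silently replaced the \emph{estimated} Laplacian $\hat\Gamma$ (which is what actually appears in the objective~\eqref{eq beta2}) by the \emph{population} Laplacian $\Gamma$ (which is what appears in the definitions of $A$, $B$, and Assumption~\ref{assum2}). Imposing $\tilde\beta_{S^c}=0$ only collapses $(\hat\Gamma\tilde\beta)_S$ to $\hat\Gamma_S\tilde\beta_S$, not to $\Gamma_S\tilde\beta_S$. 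The paper handles this by invoking Lemma~\ref{lem Omega} (Glasso consistency under Assumption~\ref{assum3}) together with the ``$n$ large'' qualifier, so that $\hat\Gamma$ may be replaced by $\Gamma$ with negligible error; this is precisely why Assumption~\ref{assum3} is listed in the hypotheses of the lemma, and it is the step your outline omits. Once you insert that replacement (and note that the resulting perturbation is absorbed by the strict inequality in $A$ and the slack $\alpha_1$ in $B$), your argument matches the paper's.
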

\begin{proof}[Proof of Lemma~\ref{lem sign}]
By KKT (Karush-Kuhn-Tucker) conditions for optimality in convex problem, the point $\hat\beta$ is optimal if and only if
\begin{align*}
X^\t X\hat\beta-X^\t y + \lam_2\hat \Gamma \hat \beta + \lam_1 \gamma = 0,
\end{align*}
\begin{align*}
\gamma_j \in \left\{
\begin{array}{*{20}{l}}
\{\sign{\hat\beta_j}\}, & \hat \beta_j \ne 0,\\
{ [ - 1,1],}&\text{otherwise}.
\end{array} \right.
\end{align*}
$\hat\beta$ is sign consistency if and only if $\hat\beta_{S^c}=0$, $\hat\beta_S\ne 0$. Thus we have sign consistency holds if and only if following hold:
\begin{align}\label{eq s1}
(X_S^\t X_S + \lam_2\hat \Gamma_S)\hat \beta_S - X_S^\t X_S\beta_S - X_S^\t \e = -\lam_1\gamma_S,
\end{align}
\begin{align}\label{eq s2}
\big|(X_{S^c}^\t X_S + \lam_2\hat\Gamma_{S^c})\hat\beta_S - X_{S^c}^\t X_S\beta_S - X_{S^c}^\t \e \big| \leqslant |\lam_1|,
\end{align}
\begin{align}\label{eq u}
|\hat \beta_S- \beta_S| < |\beta_S|.
\end{align}
By~\eqref{eq s1} - \eqref{eq u}, according to Lemma~\ref{lem Omega} and applying the nontations that $ C_S = \dfrac{1}{n}X^\t_S X_S $, $ C_{S^c}=\dfrac{1}{n}X^\t_{S^c}X_S $, when $ n $ is large, we have
\begin{align}\label{eq 01}
&\big|\big(C_S +\dfrac{\lam_2}{n}\Gamma_S\big)^{-1}W_S\big|\\ \notag
&<\sqrt n
|\beta_S|-\big|\big(C_S +\dfrac{\lam_2}{n}\Gamma_S\big)^{-1}\big[\dfrac{\lam_1}{\sqrt n}\sign{\beta_S}+\dfrac{\lam_2}{\sqrt n}\Gamma_S\beta_S\big]\big|
\end{align}
and
\begin{align}\label{eq 02}
&\big |\big(C_{S^c} + \dfrac{\lam_2}{n}  \Gamma_{S^c}\big)\big(C_S+\dfrac{\lam_2}{n} \Gamma_S\big)^{-1}W_S-W_{S^c}\big|\\ \notag
&\leqslant\dfrac{\lam_1}{\sqrt n}\big(1-\big|\big(C_{S^c} + \dfrac{\lam_2}{n} \Gamma_{S^c}\big)\big(C_S+\dfrac{\lam_2}{n} \Gamma_S\big)^{-1}\big[\sign{\beta_S}+\dfrac{\lam_2}{\lam_1} \Gamma_S\beta_S\big]-\dfrac{ \lam_2}{\lam_1} \Gamma_{S^c}\beta_S\big|\big).
\end{align}
According to Assumption~\ref{assum2}, above inequality~\eqref{eq 02} can be written as
\begin{align}\label{eq b}
\big |\big(C_{S^c} + \dfrac{\lam_2}{n}  \Gamma_{S^c}\big)\big(C_S+\dfrac{\lam_2}{n} \Gamma_S\big)^{-1}W_S-W_{S^c}\big| \leqslant \dfrac{ \lam_1}{\sqrt n}\alpha_1.
\end{align}
\eqref{eq 01} and~\eqref{eq b} coincide with $A$ and $B$ respectively.
\end{proof}

%%%%%%%

\begin{proof}[Proof of Theorem~\ref{thm 2}]
According to Lemma~\ref{lem sign}, we have
\begin{align*}
P\{\hat S = S\}  \geqslant P\{A\cap B\}
\end{align*}
and
\begin{align*}
1-P(A \cap B) &\leqslant P(A^c) + P(B^c),\\
& \leqslant \sum\limits_{i \in S} P(|z_i| \geqslant \sqrt n |\beta _i| - |b_i|) + \sum\limits_{i \in S^c} P\big(|\xi _i| \geqslant \frac{\lam_1}{\sqrt n}\alpha_1\big),
\end{align*}
where $z = (z_1,\ldots,z_q)^\t = (C_S + \frac{\lam_2}{n}\Gamma_S)^{-1}W_S$, $\xi  = (\xi_1,\ldots,\xi _{p-q})^\t = W_{S^c} - (C_{S^c}+\frac{\lam_2}{n}\Gamma_{S^c})(C_S + \frac{\lam_2}{n}\Gamma_S)^{-1}W_S$ and $b = (b_1,\ldots,b_q)^\t = (C_S + \frac{\lam_2}{n}\Gamma_S)^{-1}\big[\frac{\lam_1}{\sqrt n}\gamma_S + \frac{\lam_2}{n}\Gamma_S\beta_S\big]$.
Since $\lam_1=4\sigma\sqrt{n\log{p}}$, $\|\beta\|_2 \leqslant \sqrt{q}\lam_1/\lam_2$ and $q=O(n^{c_1})$ where $ c_1+c_3 < c_2 <1$, we have
\begin{align*}
|b_i| &\leqslant \big|\big(C_S + \frac{\lam_2}{n}\Gamma_S\big)^{-1}\frac{\lam_1}{\sqrt n}\gamma_S \big|_i + \big|\big(C_S + \frac{\lam_2}{n}\Gamma_S\big)^{-1}\frac{\lam_2}{n}\Gamma_S
\beta_S \big|_i,\\& \leqslant \frac{\lam_1}{\sqrt n}\frac{\sqrt q}{\lam _{\min }\big(C_S + \frac{\lam_2}{n}\Gamma_S\big)} + \frac{\lam_2}{n}\frac{\lam _{\max }(\Gamma_S) \|\beta_S\|_\infty}{\lam _{\min }\big(C_S + \frac{\lam_2}{n}\Gamma_S\big)},\\
& = o(n^{\frac{c_2}{2}}).
\end{align*}
Since $z_i$, $\xi_i $ are Gaussian variables with mean zero and finite variance. For $s>0$, the Gaussian distribution has its tail probability bound, we have
\begin{align*}
\sum\limits_{i\in S} P(|z_i| \geqslant \sqrt n |\beta _i|-|b_i|)  \leqslant q \cdot O\big\{ 1 - \Phi \big((1 + o(1))\frac{K_2}{s}n^{\frac{c_2}{2}}\big)\big\}  = o({e^{-n^{c_3}}})
\end{align*}
and
\begin{align*}
\sum\limits_{i\in S^c} P\big(|\xi _i| \geqslant \frac{\lam_1}{2\sqrt n}\alpha_1\big) \leqslant (p-q) \cdot O\big\{1-\Phi \big(\frac{1}{s}\frac{\lam_1}{2\sqrt n}\alpha_1\big)\big\}  =o(e^{-n^{c_3}}).
\end{align*}
Then
\begin{align*}
P(\hat S=S)\geqslant 1-1/p\rightarrow 1, \ as \ n\rightarrow\infty.
\end{align*}
\end{proof}

\bibliographystyle{apalike}
\bibliography{C:/reference/reference}
\end{document}